\pdfoutput=1
\documentclass[sigconf,authorversion]{acmart}

\usepackage{multirow}
\urlstyle{same} 
\usepackage[capitalise]{cleveref}

\makeatletter
\DeclareFontFamily{OMX}{MnSymbolE}{}
\DeclareSymbolFont{MnLargeSymbols}{OMX}{MnSymbolE}{m}{n}
\SetSymbolFont{MnLargeSymbols}{bold}{OMX}{MnSymbolE}{b}{n}
\DeclareFontShape{OMX}{MnSymbolE}{m}{n}{
    <-6>  MnSymbolE5
   <6-7>  MnSymbolE6
   <7-8>  MnSymbolE7
   <8-9>  MnSymbolE8
   <9-10> MnSymbolE9
  <10-12> MnSymbolE10
  <12->   MnSymbolE12
}{}
\DeclareFontShape{OMX}{MnSymbolE}{b}{n}{
    <-6>  MnSymbolE-Bold5
   <6-7>  MnSymbolE-Bold6
   <7-8>  MnSymbolE-Bold7
   <8-9>  MnSymbolE-Bold8
   <9-10> MnSymbolE-Bold9
  <10-12> MnSymbolE-Bold10
  <12->   MnSymbolE-Bold12
}{}

\let\llangle\@undefined
\let\rrangle\@undefined
\DeclareMathDelimiter{\llangle}{\mathopen}%
                     {MnLargeSymbols}{'164}{MnLargeSymbols}{'164}
\DeclareMathDelimiter{\rrangle}{\mathclose}%
                     {MnLargeSymbols}{'171}{MnLargeSymbols}{'171}
\makeatother

\newcommand{\Ffour}{\texttt{\texorpdfstring{F\textsubscript{4}}{F4}}}
\newcommand{\Ffive}{\texttt{\texorpdfstring{F\textsubscript{5}}{F5}}}
\newcommand{\msolve}{\href{https:msolve.lip6.fr}{\texttt{msolve}}}
\newcommand{\PML}{\href{https://github.com/vneiger/pml}{\texttt{PML}}}
\newcommand\FGLM{\textsf{FGLM}}
\newcommand\spFGLM{\textsf{Sparse-FGLM}}
\newcommand{\bigO}[1]{\mathchoice{O\left(#1\right)}{O(#1)}{O(#1)}{O(#1)}} 
\newcommand{\softO}[1]{\mathchoice{\tilde{O}\left(#1\right)}{O\tilde{~}(#1)}{O\tilde{~}(#1)}{O\tilde{~}(#1)}} 
\newcommand{\expmm}{\omega} 
\newcommand{\timepm}[1]{\mathchoice{\mathsf{M}\left(#1\right)}{\mathsf{M}(#1)}{\mathsf{M}(#1)}{\mathsf{M}(#1)}} 
\newcommand{\ZZ}{\mathbb{Z}} 
\newcommand{\NN}{\mathbb{N}} 
\newcommand{\FF}{\mathbb{F}} 
\newcommand{\ZZp}{\mathbb{Z}_{> 0}} 
\newcommand{\field}{\mathbb{K}} 
\newcommand{\nvar}{n}  
\newcommand{\ring}{\mathcal{R}} 
\newcommand{\uvar}{y} 
\newcommand{\pring}{\field[x_1,\ldots,x_{n-1},\uvar]} 
\newcommand{\uring}{\field[\uvar]} 
\newcommand{\ideal}[1][I]{\mathcal{#1}} 
\newcommand{\module}[1][M]{\mathcal{#1}} 
\newcommand{\genby}[1]{\langle #1 \rangle} 
\newcommand{\spanby}[1]{\llangle #1 \rrangle} 
\newcommand{\linspan}[1]{\operatorname{Span}_{\field}(#1)} 
\newcommand{\mon}{\mu} 
\newcommand{\monbas}{\mathcal{B}} 
\newcommand{\cmons}{\mathcal{T}} 
\newcommand{\cmonscomp}{\cmons^*} 
\newcommand{\supp}[1]{\operatorname{supp} #1} 
\newcommand{\rmod}{\module[M]_{\cmons,\ideal[I]}} 
\newcommand{\rring}{\ring_\cmons} 
\newcommand{\lm}[2]{\operatorname{lm}_{#1}(#2)} 
\newcommand{\lt}[2]{\operatorname{lt}_{#1}(#2)} 
\newcommand{\nf}[2]{\operatorname{nf}_{#1}(#2)} 
\newcommand{\matspace}[2]{\field^{#1 \times #2}} 
\newcommand{\pmatspace}[2]{\uring^{#1 \times #2}} 
\newcommand{\mat}[1]{#1} 
\newcommand{\ident}[1]{\mat{I}_{#1}} 
\newcommand{\diag}[1]{\operatorname{diag}(#1)} 
\newcommand{\rk}{\rho} 
\newcommand{\basmat}{\mat{P}} 
\newcommand{\ord}{\preccurlyeq} 
\newcommand{\ordneq}{\prec} 
\newcommand{\ordlex}{\mathrel{\ord_{\operatorname{lex}}}} 

\newcommand{\lelex}{\ordlex}
\newcommand{\orddrl}{\mathrel{\ord_{\operatorname{drl}}}} 

\newcommand{\ledrl}{\orddrl}

\newcommand{\vsdim}{D} 
\newcommand{\matgb}{\mathcal{G}} 
\newcommand{\matgbdrl}{\mathcal{G}_{\operatorname{drl}} } 
\newcommand{\matgblex}{\mathcal{G}_{\operatorname{lex}} } 
\newcommand{\hyp}[1]{\mathcal{S}(#1)} 
\newcommand{\nbl}{t} 
\newcommand{\mmat}{\mat{M}} 

\author{J\'er\'emy Berthomieu}
\affiliation{%
	\institution{Sorbonne Universit\'e, \textsc{CNRS}, \textsc{LIP6}}
	\city{F-75005 Paris}
	\postcode{75252}\country{France}}
\email{jeremy.berthomieu@lip6.fr}

\author{Vincent Neiger}
\affiliation{%
	\institution{Sorbonne Universit\'e, \textsc{CNRS}, \textsc{LIP6}}
	\city{F-75005 Paris}
	\postcode{75252}\country{France}}
\email{vincent.neiger@lip6.fr}

\author{Mohab Safey El Din}
\affiliation{%
	\institution{Sorbonne Universit\'e, \textsc{CNRS}, \textsc{LIP6}}
	\city{F-75005 Paris}
	\postcode{75252}\country{France}}
\email{mohab.safey@lip6.fr}

\title{Faster Change of Order Algorithm for Gröbner Bases \texorpdfstring{\\}{} Under Shape and Stability Assumptions}

\copyrightyear{2022}
\acmYear{2022}
\setcopyright{acmlicensed}\acmConference[ISSAC '22]{Proceedings of the 2022 International Symposium on Symbolic and Algebraic Computation}{July 4--7, 2022}{Villeneuve-d'Ascq, France}
\acmBooktitle{Proceedings of the 2022 International Symposium on Symbolic and Algebraic Computation (ISSAC '22), July 4--7, 2022, Villeneuve-d'Ascq, France}
\acmPrice{15.00}
\acmDOI{10.1145/3476446.3535484}
\acmISBN{978-1-4503-8688-3/22/07}
\settopmatter{printacmref=true}

\begin{CCSXML}
<ccs2012>
<concept>
<concept_id>10010147.10010148.10010149.10010150</concept_id>
<concept_desc>Computing methodologies~Algebraic algorithms</concept_desc>
<concept_significance>500</concept_significance>
</concept>
<concept>
<concept_id>10003752.10003809</concept_id>
<concept_desc>Theory of computation~Design and analysis of algorithms</concept_desc>
<concept_significance>500</concept_significance>
</concept>
</ccs2012>
\end{CCSXML}

\ccsdesc[500]{Computing methodologies~Algebraic algorithms}
\ccsdesc[500]{Theory of computation~Design and analysis of algorithms}

\keywords{Gr\"obner basis; polynomial system solving; change of monomial order; polynomial matrix; Hermite normal form.}

\begin{document}

\fancyhead{}
\newtheorem{remark}[theorem]{Remark}

\begin{abstract}
Solving zero-dimensional polynomial systems using Gr\"obner bases is usually
done by, first, computing a Gröbner basis for the degree reverse lexicographic
order, and next computing the lexicographic Gröbner basis with a change of
order algorithm.  Currently, the change of order now takes a significant part
of the whole solving time for many generic instances.

Like the fastest known change of order algorithms, this work focuses on the
situation where the ideal defined by the system satisfies natural properties
which can be recovered in generic coordinates. First, the ideal has a
\emph{shape} lexicographic Gröbner basis. Second, the set of leading terms with
respect to the degree reverse lexicographic order has a \emph{stability}
property; in particular, the multiplication matrix can be read on the input
Gröbner basis.

The current fastest algorithms rely on the sparsity of this matrix.  Actually,
this sparsity is a consequence of an algebraic structure, which can be
exploited to represent the matrix concisely as a univariate polynomial matrix.
We show that the Hermite normal form of that matrix yields the sought
lexicographic Gröbner basis, under assumptions which cover the shape position
case. Under some mild assumption implying \(n \le t\), the arithmetic
complexity of our algorithm is $O\tilde{~}(t^{\omega-1}D)$, where $n$ is the
number of variables, $t$ is a sparsity indicator of the aforementioned matrix,
$D$ is the degree of the zero-dimensional ideal under consideration, and
$\omega$ is the exponent of matrix multiplication. This improves upon both
state-of-the-art complexity bounds $O\tilde{~}(tD^2)$ and
$O\tilde{~}(D^\omega)$, since $\omega < 3$ and \(t\le D\).  Practical
experiments, based on the libraries \msolve{} and \PML, confirm the high
practical benefit.

\bigskip 

\end{abstract}
\thanks{%
  The authors are supported by the joint ANR-FWF
  ANR-19-CE48-0015 \textsc{ECARP} project, the ANR grants ANR-18-CE33-0011
  \textsc{Sesame} and ANR-19-CE40-0018 \textsc{De Rerum Natura}
  projects, grant FA8665-20-1-7029 of
  the EOARD-AFOSR and the European
  Union's Horizon 2020 research and innovation programme under the Marie
  Sk\l{}odowska-Curie grant agreement N. 813211 (POEMA).
  We thank the
  referees for their valuable comments on the paper.}

\maketitle

\section{Introduction}
\label{sec:intro}

\paragraph*{Context}

A method of choice for solving polynomial systems of dimension zero with
  coefficients in some field $\field$ consists in computing a Gr\"obner basis
  for a degree-refining order (such as the degree reverse lexicographic one)
  using an algorithm such as Buchberger's, or Faugère's \Ffour{} or \Ffive{}
  algorithms \cite{Buchberger1976,Faugere1999,Faugere2002} and next apply a
  change of order to obtain the lexicographic Gr\"obner basis, using the \FGLM{}
  algorithm~\cite{FaGiLaMo93}. We refer to
  \cite{Sturmfels2002,KreuzerRobbiano2016} for an exposition
  on solving polynomial systems through algebraic methods,
  and some applications.

In this paper, we consider polynomial systems and ideals in \(\field[x_1,
  \ldots, x_{n-1}, y]\) as well as two classical assumptions (which are recalled
  with more details below). When a zero-dimensional ideal \(\mathcal{I}\) under
  consideration is in \emph{shape position} (see the \emph{shape lemma} in
  \cite[Lem.\,1.4]{GianniM1989}, and \cref{eqn:shape_position} below), the
  leading monomials of its reduced lexicographical Gr\"obner basis are $x_1,
  \ldots, x_{n-1}, y^{\vsdim}$ where $\vsdim$ is the degree of the ideal. This
  shape position is important for solving polynomial systems, as it reduces multivariate to univariate solving. The second
  assumption is a \emph{stability} one which we also describe in detail
    below. Roughly, a consequence of this assumption is that the matrix
    encoding the multiplication by $y$ in the quotient ring \(\field[x_1,
    \ldots, x_{n-1}, y] / \mathcal{I}\) can be read on the reduced Gr\"obner
    basis of $\mathcal{I}$ for the degree-refining monomial order. Both
    the shape position and stability property are satisfied generically and,
    when the base field is large enough and the ideal under consideration
    is radical, they can be ensured through a generic linear change of
    coordinates.
When these assumptions hold, \spFGLM{} variant
  \cite{FaugereM2011,FaugereM2017} can be applied and is faster than the
  classical \FGLM{} algorithm for the change of order step.

This paper aims at improving this change of order step, under
  assumptions similar to the ones of \spFGLM, since despite the progress brought by
  \cite{FaugereM2011,FaugereM2017}, this step has become the bottleneck of
  polynomial system solving with Gr\"obner bases on a wide range of problems
  (see~\cite[Tbl.\,1]{msolve}).

\paragraph*{Prior results}

The original \FGLM{} algorithm \cite{FaGiLaMo93}
uses $\bigO{n \vsdim^3}$ operations in $\field$ without any assumption.
We refer to \cite[Thm.\,1.7]{NeigerSchost2020} for some improvements around
  this algorithm.

In~\cite{FaugereGHR2014}, the authors consider the special case where $\ord_1$
is the degree reverse lexicographic order $\orddrl$ and $\ord_2$ is the
lexicographic one $\ordlex$, with \(y \orddrl x_k\) and \(y \ordlex x_k\) for
\(1 \le k \le n-1\). They show that the aforementioned multiplication matrix by
\(\uvar\) can be read on the \(\orddrl\)-Gr\"obner
basis,
under some genericity assumptions and using results from
\cite{MorenoSocias2003}.
Assuming additionally the shape position, this multiplication matrix alone
suffices to recover the $\lelex$-Gr\"obner basis, which is done in time
$\softO{\vsdim^{\expmm}}$ \cite{FaugereGHR2014}.

In~\cite{FaugereM2011,FaugereM2017}, the authors follow on from the same
assumptions: shape and stability. They observe and exploit the
  sparsity of the multiplication matrix thanks to Wiedemann's approach
\cite{Wie86}. Precisely, the matrix has about \(\nbl\vsdim\) nonzero entries out
of \(\vsdim^2\), where the parameter \(\nbl\) is the number of polynomials in
the $\ledrl$-Gr\"obner basis whose leading monomial is divisible by $\uvar$.
This leads to a complexity estimate of $\bigO{t \vsdim^2}$ operations in
$\field$, which improves upon \(\softO{\vsdim^\expmm}\) when \(\nbl\) is small
compared to \(\vsdim\). This provides significant practical benefit for the
change of order step of polynomial system solving in many cases
\cite[Tbl.\,2]{FaugereM2017}, and this is the approach used by the
state-of-the-art change of order implementation in \msolve{} \cite{msolve}.

\paragraph*{Contributions}
We push forward the study of the properties of the multiplication matrix
  \(\mmat\) by $y$. Let $\monbas$ be the monomial basis of \(\field[x_1, \ldots,
  x_{n-1}, y] / \mathcal{I}\) obtained from the reduced $\ledrl$-Gr\"obner basis
  $\mathcal{G}$ of $\mathcal{I}$. Under the stability position assumption, its
  columns are either unit vectors (for those $\mu\in \monbas$ such that $y\cdot\mu
  \in \monbas$) or vectors of coefficients of polynomials in $\mathcal{G}$ (for
  thos $\mu \in \monbas$ such that $y\cdot\mu$ is a leading monomial of one element
  in $\mathcal{G}$). The latter ones are usually referred to as a ``dense''
  columns \cite[Sec.\,5]{FaugereM2017} \cite[Sec.\,4]{Steel2015}. 
This is a well-known matrix structure in \(\field\)-linear algebra, called a
\emph{shifted form} in \cite{PernetStorjohann2007,PernetStorjohann2007extended},
and studied in particular in the context of the computation of the
characteristic polynomial or the Frobenius normal form of a matrix over
\(\field\) (see \cite[Sec.\,9.1]{Storjohann00} and
\cref{sec:module:shiftedform}).

We exploit the algebraic structure itself and relate it to operations in a
\(\uring\)-submodule of \(\ideal\). Following a classical construction in
\cite[Sec.\,9.1]{Storjohann00}, instead of the multiplication matrix \(\mmat\)
which is in \(\matspace{\vsdim}{\vsdim}\), we consider a univariate polynomial
matrix \(\basmat\) in \(\pmatspace{\nbl}{\nbl}\) whose average column degree is
\(\vsdim / \nbl\). This polynomial matrix can be seen as a ``compression'' of
\(\mmat\), or more precisely of the characteristic matrix \(\uvar \ident{\vsdim}
- \mmat\), with smaller matrix dimension but larger degrees.

Our main result is that, if \(\basmat\) is known and the lexicographic Gr\"obner basis
satisfies some assumption which covers the shape position case, then this Gr\"obner basis can be
directly retrieved from the Hermite normal form of \(\mat{P}\)
(\cref{thm:hnf_gb}). We also prove that the matrix \(\mat{P}\) can be computed
for free from some part of a border basis of \(\ideal\) in general
(\cref{thm:border_rmod_basis}) and, as a consequence under the stability
assumption, from the input Gr\"obner basis (\cref{cor:gbdrl_is_rmod_basis}).
Observe that both structural assumptions, of being \emph{stable} and
\emph{shape}, are used independently. In particular it is expected that in some
situations where the stability assumption is not satisfied, \(\mat{P}\) may
still be obtained efficiently, and then its Hermite normal form yields the
lexicographic Gr\"obner basis if \(\ideal\) is in shape position.

The Hermite normal form can be computed deterministically in
\(\softO{\nbl^{\expmm-1}\vsdim}\) operations in \(\field\)
\cite{LabahnNeigerZhou2017}, which dominates the overall complexity of the
change of order. 
\begin{theorem}\label{thm:main}
  Let $\ideal\subset \ring= \pring$ be a zero-dimen\-sion\-al ideal of degree
  $\vsdim$. Let $\mathcal{G}_{\ledrl}$ (resp.\ $\mathcal{G}_{\lelex}$) be the
  reduced $\ledrl$- (resp.\ $\lelex$-)Gr\"obner basis of $\mathcal{I}$ and
  $\monbas$ be the $\ledrl$-monomial basis of $\ring/\ideal$. Assume that
  \(x_1,\ldots,x_{n-1}\) are in \(\monbas\), and that for all monomials $\mu \in \monbas$, either
  $y\cdot\mu$ is in $\monbas$ or it is the \(\ledrl\)-leading monomial of an
  element in $\mathcal{G}_{\ledrl}$. Assume that $\ideal$ is in shape position.
  Then, one can compute $\matgblex$ using $\softO{\nbl^{\omega-1}\vsdim}$
  operations in $\field$, where \(\nbl\) is the number of elements of
  $\mathcal{G}_{\ledrl}$ whose \(\ledrl\)-leading monomial is divisible by
  $\uvar$.
\end{theorem}

Compared to the previous \(\bigO{\nbl \vsdim^2}\), the
speed-up factor is of the order of \(\nbl^{2-\expmm} \vsdim\).
We give explicit complexity gains for
families of polynomial systems for which closed formulas or asymptotic
estimates for $\nbl$ and $\vsdim$ are
known~\cite{FaugereM2017,BerthomieuBostanFergusonSafey2022}.

We study the practical performance of the new approach. For this, we
designed an efficient implementation of the Hermite normal form, which follows
the approach of \cite{LabahnNeigerZhou2017} but tailored to the matrices
\(\basmat\) encountered here, which have specific degree shapes. This
implementation relies on the Polynomial Matrix Library
\cite{HyunNeigerSchost2019} (PML) and on NTL \cite{NTL}. We show that it
outperforms both the existing change of order algorithm in the current version
of \msolve{} \cite{msolve}, and an implementation of a block-Wiedemann approach
in NTL.

\paragraph*{Structure of the paper}
\cref{sec:preliminaries} is devoted to preliminaries and detailed
  definitions of the shape position and stability assumptions. \cref{sec:module}
  shows how the aforementioned univariate polynomial matrix \(P\) can be
  obtained from a module-theoretic perspective. \cref{sec:hnf_gblex} establishes
  the connexion between the reduced $\lelex$-Gr\"obner basis of \(\ideal\) and
  the Hermite normal form of \(P\). \cref{sec:gb_popov} shows how to compute
  \(\basmat\) from a Gr\"obner basis. Finally, in \cref{sec:impact}, we discuss
  complexity results and report on practical
  performance.

\section{Notation and preliminaries}
\label{sec:preliminaries}

Consider the polynomial ring \(\ring = \pring\). For a nonzero polynomial
\(f\in\ring\setminus\{0\}\), the \emph{support} of \(f\), denoted by
\(\supp(f)\), is the collection of all monomials appearing in \(f\), with a
nonzero coefficient. For a set of polynomials \(S \subset \ring\), the ideal
generated by $S$ in \(\ring\) is denoted by \(\genby{S}\).

\paragraph{Monomial orders, normal forms.}

For the definition of a \emph{monomial order} \(\ord\) on \(\ring\), we refer
to \cite[Chap.~2, \S2]{CoLiOSh07}. We recall that \(\ord\) is a total order on
the set of monomials, and we write \(\ordneq\) for the corresponding strict
order. Here, monomial orders are such that \(\uvar \ordneq x_{n-1} \ordneq
\cdots \ordneq x_1\). We will use the \emph{lexicographic} order \(\ordlex\),
and the \emph{degree reverse lexicographic} order \(\orddrl\). We denote by
\(\lt{\ord}{f}\) the \(\ord\)-leading term of a nonzero polynomial
\(f\in\ring\), and by \(\lt{\ord}{S}\) the set of \(\ord\)-leading terms of all
nonzero elements of a set \(S \subseteq \ring\).

For a monomial order \(\ord\) and an ideal \(\ideal \subset \ring\), consider
the set \(\monbas\) of monomials in \(\ring\) that are not in the ideal of
leading terms \(\genby{\lt{\ord}{\ideal}}\). This set \(\monbas\) is called the
\emph{\(\ord\)-monomial basis} of \(\ring/\ideal\): it is a basis of \(\ring/\ideal\)
as a \(\field\)-vector space \cite[Prop.\,6.52]{BeckerWeispfenning1993}. For a
polynomial \(f\in\ring\), the \emph{\(\ord\)-normal form} of \(f\) with respect
to \(\ideal\), denoted by \(\nf{\ord,\ideal}{f}\), is the unique polynomial
whose support is in \(\monbas\) and such that \(f-\nf{\ord,\ideal}{f} \in
\ideal\).

\paragraph{Gr\"obner bases, shape position}

For the notion of (reduced) \(\ord\)-\emph{Gr\"obner bases} of ideals in
\(\ring\), we refer to \cite[Chap.~2]{CoLiOSh07}. By definition, for a
\(\ord\)-Gr\"obner basis \(\matgb\) of \(\ideal\), we have
\(\matgb\subset\ideal\) and 
\(\genby{\lt{\ord}{\matgb}} = \genby{\lt{\ord}{\ideal}}\) and the
\(\ord\)-monomial basis \(\monbas\) is also the set of monomials that are not
multiples of an element in \(\lt{\ord}{\matgb}\). 

A proper ideal \(\ideal \subset \ring\) is \emph{zero-dimensional}
\cite[Def.\,6.46]{BeckerWeispfenning1993} if and only if \(\ring/\ideal\) has
finite dimension \(\vsdim\) as a \(\field\)-vector space
\cite[Thm.\,6.54]{BeckerWeispfenning1993}. In that case, following
\cite{BeckerMMT1994}, \(\ideal\) is said to be in \emph{shape position} if its
reduced \(\ordlex\)-Gr\"obner basis has the form
\begin{equation}
  \label{eqn:shape_position}
  \matgblex
  =
  \{
  h(\uvar),
  \;\;
  x_{n-1}-g_{n-1}(y),
  \;\;
  \ldots,
  \;\;
  x_1-g_1(y)
  \},
\end{equation}
where \(g_1,\ldots,g_{n-1},h\) are in \(\uring\). By properties of reduced
Gr\"obner bases, this implies \(\deg g_i < \deg h\) for \(1 \le i < n\).
Then, \(\ring/\ideal\) is isomorphic to \(\uring/\genby{h(y)}\) as an
\(\ring\)-module (equipping this quotient with the multiplication \(x_i \cdot f
= g_i(y)f\) for \(1 \le i < \nvar\)), and the \(\ordlex\)-monomial basis is
\((1,\uvar,\ldots,\uvar^{\vsdim-1})\) for \(\vsdim = \deg h =
\dim_{\field}(\ring/\ideal)\).

\paragraph{Stability assumption}

This assumption, mentioned in \cref{sec:intro}, concerning the stability of the
ideal of \(\orddrl\)-leading terms of \(\ideal\), is defined as follows.

\begin{definition}
  \label{def:stability_assumption}
  For a set \(S\) of monomials in \(\ring\), the statement \(\hyp{S}\) is:
  ``for any monomial \(\mon \in S\) such that \(\uvar\) divides \(\mon\), the
  monomial \(\frac{x_i}{\uvar}\mon\) belongs to \(S\) for all
  \(i\in\{1,\ldots,n-1\}\)''.
\end{definition}

This is directly related to classical notions of stability of sets of monomials
and of monomial ideals \cite[Sec.\,4.2.2, 6.3 and\,7.2.2]{HerzogHibi2011},
which arise notably through the Borel-fixedness of generic initial ideals
\cite{Galligo1974,BayerStillman1987}. The next lemma states that when
considering the monomials in a monomial ideal, the above statement holds if,
and only if, it holds for the minimal generating monomials of that ideal. 

\begin{lemma}
  \label{lem:structural_assumption_mon}
  Let \(\mathcal{J}\) be a monomial ideal of \(\ring\), and
  \(\{\mon_1,\ldots,\mon_s\}\) be its minimal generating set. Let \(S\) be the
  set of monomials in \(\mathcal{J}\). Then, \(\hyp{S}\) is equivalent to
  \(\hyp{\mon_1,\ldots,\mon_s}\).
\end{lemma}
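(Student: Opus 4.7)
The plan is to prove each implication separately. The forward direction \(\hyp{S} \Rightarrow \hyp{\mon_1,\ldots,\mon_s}\) will be immediate: since \(\{\mon_1,\ldots,\mon_s\} \subseteq S\), the conclusion that \(\frac{x_i}{\uvar}\mon_j \in S\) for every minimal generator \(\mon_j\) divisible by \(\uvar\) is just a special case of \(\hyp{S}\) applied at \(\mon=\mon_j\).

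For the converse, I would take an arbitrary \(\mon \in S\) divisible by \(\uvar\) and aim to show \(\frac{x_i}{\uvar}\mon \in S\) for every \(i \in \{1,\ldots,n-1\}\). Since \(\mon\) is a monomial of the monomial ideal \(\mathcal{J}\), it is a multiple of at least one minimal generator, so I write \(\mon = \mon_j \tau\) for some index \(j\) and monomial \(\tau\). The proof then splits on whether one can select the divisor \(\mon_j\) so that \(\uvar \mid \mon_j\). If so, the hypothesis \(\hyp{\mon_1,\ldots,\mon_s}\) applied to that \(\mon_j\) gives \(\frac{x_i}{\uvar}\mon_j \in S\); multiplying by \(\tau\) and using that \(S\), being the set of monomials of a monomial ideal, is closed under monomial multiplication yields \(\frac{x_i}{\uvar}\mon = \bigl(\tfrac{x_i}{\uvar}\mon_j\bigr)\tau \in S\). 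Otherwise no minimal generator dividing \(\mon\) is divisible by \(\uvar\); fixing any such \(\mon_j\), the relation \(\uvar \mid \mon = \mon_j \tau\) together with \(\uvar \nmid \mon_j\) forces \(\uvar \mid \tau\). Writing \(\tau = \uvar \tau'\), the monomial \(\frac{x_i}{\uvar}\mon = \mon_j x_i \tau'\) is a multiple of \(\mon_j\), and therefore lies in \(S\).

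I do not anticipate any substantive obstacle: the two cases above are exhaustive by construction, and the specific choice of \(\mon_j\) in each case is immaterial. The only interpretive point to be pinned down at the start of the proof is that \(\hyp{\mon_1,\ldots,\mon_s}\) is to be read with the ambient reference set \(S\) (equivalently, asking that each \(\frac{x_i}{\uvar}\mon_j\) lands in the ideal \(\mathcal{J}\)), which is the only reading under which the equivalence holds and which is consistent with the uses of the statement elsewhere in the paper.
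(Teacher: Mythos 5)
Your proof is correct. Note that the paper does not actually prove this lemma: it explicitly defers to \cite[Lem.\,2.2]{NeigerSchost2020}, so there is no in-paper argument to compare against; your write-up supplies the missing elementary proof. The structure is the expected one: the forward direction is a restriction to the subset of generators, and the converse reduces an arbitrary \(\mon\in S\) divisible by \(\uvar\) to a minimal generator \(\mon_j\mid\mon\), splitting on whether \(\uvar\mid\mon_j\) (apply the hypothesis and multiply by the cofactor) or not (then \(\uvar\) divides the cofactor and \(\frac{x_i}{\uvar}\mon\) is still a multiple of \(\mon_j\), with no appeal to the hypothesis). Both cases are handled correctly, and your interpretive remark is also right: \(\hyp{\mon_1,\ldots,\mon_s}\) must be read as requiring \(\frac{x_i}{\uvar}\mon_j\) to lie in \(\mathcal{J}\) (equivalently in \(S\)) rather than in the finite generating set itself, which is exactly the reading the paper uses in \cref{ex:main} when it checks that the shifted generators ``remain in \(\lt{\orddrl}{\ideal}\)''.
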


We do not prove this, as this is a direct consequence of
\cite[Lem.\,2.2]{NeigerSchost2020}. For our purpose, we are mostly interested
in the case \(\mathcal{J} = \lt{\ord}{\ideal}\) for some monomial order
\(\ord\) and some ideal \(\ideal\). The above lemma shows that, if
\(\lt{\ord}{\ideal}\) is known (for example via the \(\ord\)-leading terms of a
\(\ord\)-Gr\"obner basis), then it is straightforward to check whether
\(\hyp{\lt{\ord}{\ideal}}\) holds.

\begin{example}
  \label{ex:main}
  Consider the ideal \(\ideal\) of \(\FF_{29}[x_1,x_2,\uvar]\) generated by
  \begin{align*}
    x_2^{2}+12 x_1 \uvar +26 x_2 \uvar +5 \uvar^{2}+9 x_1 +6 x_2 +8 \uvar +6
    ,\\
    x_1 x_2 +10 x_2^{2}+10 x_1 \uvar +9 \uvar^{2}+2 x_1 +14 x_2 +\uvar +13
    ,\\
    x_1^{2}+7 x_1 x_2 +27 x_2^{2}+15 x_1 \uvar +24 x_2 \uvar +3
    \uvar^{2}+4 x_1 +28 x_2 +18 \uvar +26.
  \end{align*}
  Its reduced \(\orddrl\)-Gr\"obner basis \(\matgbdrl\) consists of the polynomials
  \begin{align*}
    \boldsymbol{\uvar^{4}}+3 \uvar^{3}+15 x_1 \uvar +23 x_2 \uvar
    +3 \uvar^{2} + 26 x_2 +22 \uvar
    ,\\ 
    \boldsymbol{x_2 \uvar^{2}}+5 x_1 \uvar +28 x_2 \uvar +3 \uvar^{2}
    +19 x_1 +15 x_2 +17
    ,\\ 
    \boldsymbol{x_1 \uvar^{2}}+18 \uvar^{3}+24 x_1 \uvar +27 x_2 \uvar
    +19 \uvar^{2} +2 x_1 +9 \uvar +3
    ,\\
    x_2^{2}+12 x_1 \uvar +26 x_2 \uvar +5 \uvar^{2}+9 x_1 +6 x_2 +8 \uvar +6
    ,\\ 
    x_1 x_2 +6 x_1 \uvar +x_2 \uvar +17 \uvar^{2}+28 x_1 +12 x_2 +8 \uvar +11
    ,\\
    x_1^{2}+x_1 \uvar +10 x_2 \uvar +2 \uvar^{2}+3 x_1 +16 x_2 +21
    .
  \end{align*}
  Observe that it has $\nbl=3$ polynomials whose \(\orddrl\)-leading terms, in
  boldface font, are multiples of $\uvar$.
  The \(\orddrl\)-monomial basis \(\monbas\) of \(\ring/\ideal\) is the set of
  monomials not in
  \[
    \genby{\lt{\orddrl}{\matgbdrl}}
    = \genby{\uvar^4, x_2\uvar^2, x_1 \uvar^2, x_2^2, x_1x_2, x_1^2},
  \]
  that is, $\monbas=(1,\uvar,\uvar^2,\uvar^3,x_2,x_2 \uvar,x_1,x_1 \uvar)$.

  Finally, we verify that the stability property \(\hyp{\lt{\orddrl}{\ideal}}\)
  holds. As noted in \cref{lem:structural_assumption_mon}, it is sufficient to
  check that for each minimal generator \(\mon\) of
  \(\genby{\lt{\orddrl}{\ideal}}\) such that \(\uvar\) divides \(\mon\), the
  monomials \(\frac{x_1}{y}\mon\) and \(\frac{x_2}{y}\mon\) remain in
  \(\lt{\orddrl}{\ideal}\). Thus we consider \(\mon \in \{x_2 \uvar^2,
  x_1\uvar^2, \uvar^4\}\), and it is easily verified that $x_1x_2\uvar$, $x_2^2
  \uvar$, $x_1^2 \uvar$, $x_1x_2\uvar$, $x_1\uvar^3$, $x_2\uvar^3$ are all in
  \(\lt{\orddrl}{\ideal}\).
  \qed
\end{example}

\section{Restricting to a
  \texorpdfstring{\(\uring\)}{K[y]}-module}
\label{sec:module}

The algorithmic approach in this paper makes use of a \(\uring\)-module denoted
by \(\rmod\), which is defined from an ideal \(\ideal\) and a set of monomials
\(\cmons\). Using the module addition of $\field[x_1, \ldots, x_{n-1},
  y]$, this module is a subset of \(\ideal\), which is sufficient to recover
the \(\ordlex\)-Gr\"obner basis of \(\ideal\) in the shape position case, and
which allows us to benefit from efficient algorithms for matrices over
\(\uring\).

\subsection{General definitions and properties}
\label{sec:module:basics}

See \cite[Chap.\,10]{DummitFoote2004} for general definitions and properties of
modules. Roughly, \emph{free} modules are those which admit a basis, and since
\(\uring\) is commutative, all bases of a free \(\uring\)-module have the same
cardinality which is called the \emph{rank} of the module
\cite[Sec.\,10.3]{DummitFoote2004}.

For modules over a principal ideal domain such as \(\uring\), we refer to
\cite[Chap.\,12]{DummitFoote2004}. In particular, if \(\module[N]\) is a free
\(\uring\)-module of rank \(\nbl \in \NN\) and \(\module[M]\) is a
\(\uring\)-submodule of \(\module[N]\), then \(\module[M]\) is free and its
rank \(\rk\) is at most \(\nbl\) \cite[Sec.\,12.1, Thm.\,4]{DummitFoote2004}.
As a result, \(\module[M]\) has a basis of cardinality \(\rk\), which can be
represented as a matrix in \(\pmatspace{\rk}{\nbl}\). This matrix has full row
rank, and its rows are the basis elements. 
Furthermore \(\module[M]\) has a unique basis
in a specific form, at the core of this work: the \emph{Hermite normal form}
\cite{Hermite1851,Kailath1980}. When \(\rk=\nbl\), a matrix \(\mat{P} =
[p_{ij}] \in \pmatspace{\nbl}{\nbl}\) is in Hermite normal form if:
\begin{itemize}
  \item \(\mat{P}\) is lower triangular; 
  \item the diagonal entries of \(\mat{P}\) are monic;
  \item in each column of \(\mat{P}\), the diagonal entry has degree greater
    than the other entries, i.e.~\(\deg(p_{ij}) < \deg(p_{jj})\) for \(i\neq
    j\).
\end{itemize}

A typical example of ambient module is \(\module[N]=\uring^\nbl\). Here we will
also consider the \(\uring\)-module \(\module[N] = \rring \subset \ring\),
defined as follows. Let \(\cmons = (\mon_{1},\ldots,\mon_{\nbl})\) be a list of
pairwise distinct monomials in \(\field[x_1,\ldots,x_{n-1}]\), and consider the
set of monomials
\[
  \cmonscomp = \{\uvar^e  \mu \mid \mu \in \cmons, e \ge 0\}
\]
of \(\uvar\)-multiples of a monomial in \(\cmons\). Then we define
\begin{equation}
  \label{eqn:rring}
  \rring = \{f \in \ring \mid \supp(f) \in \cmonscomp\},
\end{equation}
which is a free \(\uring\)-module of rank \(\nbl\), with basis given by
\(\cmons\).

Hereafter, for a finite set of polynomials \(S \subset \ring\), the
\(\uring\)-module generated by \(S\) will be denoted by \(\spanby{S}\).

\begin{example}
  \label{ex:cmons}
  Let \(\cmons = (1,x_{n-1},\ldots,x_1)\). Then
  \[
    \rring = \spanby{1,x_{n-1},\ldots,x_1} = \uring + x_{n-1} \uring + \cdots + x_1 \uring
  \]
  is a \(\uring\)-submodule of \(\ring\) of rank \(n\).
  \qed
\end{example}

\subsection{A module associated to the ideal}
\label{sec:module:module}

Consider the \(\uring\)-module \(\rring\) as in \cref{sec:preliminaries}, for
some pairwise distinct monomials \(\cmons = (\mon_{1},\ldots,\mon_{\nbl})\) in
\(\field[x_1,\ldots,x_{n-1}]\). This module is free of rank \(\nbl\), with
basis \(\cmons\). Then, for any ideal \(\ideal\) of \(\ring\), let
\begin{equation}
  \label{eqn:rmod}
  \rmod = \ideal \cap \rring.
\end{equation}
By construction, we have the inclusion of ideals \(\genby{\rmod} \subseteq
\ideal\).

\begin{example}
  Let \(\cmons\) be as in \cref{ex:cmons}. Then \(\rmod\) is the set of
  polynomials in \(\ideal\) which have degree at most \(1\) in each of the
  variables \(x_{n-1},\ldots,x_1\), and
  \begin{itemize}
    \item for \(\ideal = \genby{x_1^2}\), \(\rmod = \{0\}\) and \(\genby{\rmod}
      \subsetneq \ideal\);
    \item for \(\ideal = \genby{x_1-1}\), \(\rmod = (x_1-1)\uring\) and
      \(\genby{\rmod} \subsetneq \ideal\);
    \item for a zero-dimensional ideal \(\ideal\), if \(\ideal\) is in shape position
      then \(\genby{\rmod} = \ideal\) (see \cref{lem:shape_is_fine}).  \qed
  \end{itemize}
\end{example}

The case of equality \(\genby{\rmod} = \ideal\) is of particular interest: it
ensures that no information is lost when restricting to polynomials with
monomial support in \(\cmonscomp\). Our aim is to compute objects related to
\(\ideal\), such as its \(\ordlex\)-Gr\"obner basis, using only computations in
the smaller submodule \(\rmod\). The motivation behind this idea is that many
efficient tools are known for computing with \(\rmod\), thanks to the matrix
representation explained below.

As seen in \cref{sec:module:basics}, as a \(\uring\)-submodule of \(\rring\),
\(\rmod\) is free of rank \(\rk\), with \(\rk \le \nbl\), and any basis of
\(\rmod\) is a collection of \(\rk\) polynomials \(\{P_1,\ldots,P_\rk\} \subset
\rring\). Such a basis can be represented as a matrix
\begin{equation}
  \label{eqn:basis_matrix}
  \basmat = 
  \begin{bmatrix}
    P_{11} & \cdots & P_{1\nbl} \\
    \vdots & \vdots & \vdots \\
    P_{\rk1} & \cdots & P_{\rk\nbl}
  \end{bmatrix}
  \in \pmatspace{\rk}{\nbl}
\end{equation}
of rank \(\rk\), whose row \(i\) is formed by the univariate polynomials
\(P_{i1}, \ldots, P_{i\nbl} \) in \(\uring\) such that \(P_i = P_{i1} \mon_1 +
\cdots + P_{i\nbl} \mon_\nbl\).

\begin{example}[following on from \cref{ex:main}]
  \label{ex:main_pmat}
  Take \(\cmons\) as the set of monomials in \(\monbas\) which are not
  multiples of $\uvar$, that is, $\cmons = (1,x_2,x_1)$; observe that the
  cardinality of \(\cmons\) is \(\nbl=3\). As noted in \cref{ex:cmons},
  \(\rmod\) is then the set of polynomials in \(\ideal\) which have degree at
  most \(1\) in \(x_{1}\) and in \(x_2\). This is the case for \(3\)
  polynomials of \(\matgbdrl\):
  \begin{align*}
    \boldsymbol{\uvar^{4}}+3 \uvar^{3}+15 x_1 \uvar +23 x_2 \uvar
    +3 \uvar^{2} + 26 x_2 +22 \uvar
    ,\\
    \boldsymbol{x_2 \uvar^{2}}+5 x_1 \uvar +28 x_2 \uvar +3 \uvar^{2}
    +19 x_1 +15 x_2 +17
    ,\\ 
    \boldsymbol{x_1 \uvar^{2}}+18 \uvar^{3}+24 x_1 \uvar +27 x_2 \uvar
    +19 \uvar^{2} +2 x_1 +9 \uvar +3
    .
  \end{align*}
  Hence these polynomials are in \(\rmod\); note they are exactly the
  polynomials of \(\matgbdrl\) whose \(\orddrl\)-leading terms are multiples of
  $\uvar$. In \cref{sec:gb_popov} we will prove that, since
  \(\hyp{\lt{\orddrl}{\ideal}}\) is satisfied (see \cref{ex:main}), these
  polynomials form a basis of \(\rmod\).

  Representing these polynomials on the basis \(\cmons\) of \(\rring\), we
  obtain the following matrix in $\FF_{29}[\uvar]^{\nbl \times \nbl}$:
  \[
    \mat{P} =
    \begin{bmatrix}
      \uvar^{4}+3 \uvar^{3}+3 \uvar^{2}+22 \uvar  & 23 \uvar +26 & 15 \uvar  
      \\
      3 \uvar^{2}+17 & \uvar^{2}+28 \uvar +15 & 5 \uvar +19 
      \\
      18 \uvar^{3}+19 \uvar^{2}+9 \uvar +3 & 27 \uvar  & \uvar^{2}+24 \uvar +2 
    \end{bmatrix}.
  \]
  Note that this matrix is directly read off from \(\matgbdrl\).
  \qed
\end{example}

We end this section by showing that if \(\ideal\) is zero-dimensional, then the
bases \(\basmat \in \pmatspace{\rk}{\nbl}\) of \(\rmod\) are square,
nonsingular matrices. This is implied by the first item of the
following lemma, thanks to the fact that a zero-dimensional ideal contains a
univariate polynomial in each variable
\cite[Lem.\,6.50]{BeckerWeispfenning1993}. For completeness, we also give a
partial converse property in the second item.
\begin{lemma}
  \label{lem:zerodim_rank}
  With the above notation,
  \begin{itemize}
    \item If there exists a nonzero univariate \(h \in \ideal\cap \uring\),
      then \(\rmod\) has rank \(\rk=\nbl\) as a \(\uring\)-module.
    \item If \(\rmod\) has rank \(\rk=\nbl\) as a \(\uring\)-module
      and \(1 \in \cmons\), then there exists a nonzero univariate \(h \in
      \ideal\cap \uring\).
  \end{itemize}
\end{lemma}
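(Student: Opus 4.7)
The plan is to prove the two implications separately, both relying on the matrix/module description of $\rmod$ inside the free $\uring$-module $\rring$ of rank $\nbl$.

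For the first item, given a nonzero univariate $h \in \ideal \cap \uring$, I would exhibit $\nbl$ explicit $\uring$-linearly independent elements of $\rmod$, namely the polynomials $h \mon_1, \ldots, h \mon_\nbl$. Each $h \mon_j$ lies in $\ideal$ because $h$ does, and its support is contained in $\{\uvar^e \mon_j : e \ge 0\} \subseteq \cmonscomp$, so it lies in $\rring$ as well; hence $h \mon_j \in \rmod$. Linear independence over $\uring$ follows from the fact that $\mon_1,\ldots,\mon_\nbl$ is a $\uring$-basis of $\rring$ and $\uring$ is an integral domain: any relation $\sum_j p_j(\uvar)\, h \mon_j = 0$ forces $p_j h = 0$ coordinate-wise, hence $p_j = 0$. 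Consequently the rank $\rk$ of $\rmod$ is at least $\nbl$, and since $\rmod \subseteq \rring$ has rank at most $\nbl$, we conclude $\rk = \nbl$.

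For the second item, I would use that when a submodule of a free $\uring$-module has the same rank as the ambient module, the quotient is a torsion $\uring$-module. Concretely, tensoring the inclusion $\rmod \subseteq \rring$ with the field of fractions $\field(\uvar)$ gives an inclusion of $\field(\uvar)$-vector spaces of equal dimension $\nbl$, hence an equality, which forces $(\rring/\rmod) \otimes_{\uring} \field(\uvar) = 0$. Thus every class in $\rring/\rmod$ is torsion. Assuming $1 \in \cmons$, the class of $1$ admits a nonzero annihilator $h \in \uring$, so $h \cdot 1 = h \in \rmod \subseteq \ideal$, yielding the desired nonzero $h \in \ideal \cap \uring$.

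Neither step is technically delicate; the one point worth stressing is why the hypothesis $1 \in \cmons$ cannot be dropped in the second item. Without it, the torsion argument only produces nonzero $h_j \in \uring$ with $h_j \mon_j \in \ideal$ for each $\mon_j \in \cmons$, and such products are genuinely multivariate when $\mon_j \ne 1$, so they do not by themselves exhibit a univariate element of $\ideal$. An equivalent, more explicit version of the argument for item~(ii) uses a basis matrix $\basmat \in \pmatspace{\nbl}{\nbl}$ of $\rmod$: since $\rk = \nbl$, $\basmat$ is nonsingular, and the adjugate formula gives $\mathbf{q} \in \uring^\nbl$ with $\mathbf{q}^\mathsf{T}\basmat = \det(\basmat)\, \mathbf{e}_1^\mathsf{T}$; reading this as a relation among the rows of $\basmat$ shows $\det(\basmat) \cdot \mon_1 \in \rmod$, and taking $\mon_1 = 1$ delivers $h = \det(\basmat)$ as the sought univariate element.
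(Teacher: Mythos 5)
Your proof is correct and follows essentially the same route as the paper: for the first item you exhibit $h\cmons$ as a free rank-$\nbl$ submodule of $\rmod$ (the paper phrases this as $h\rring\subseteq\rmod$), and your explicit adjugate/Cramer argument for the second item is exactly the paper's, with the torsion-quotient version being just an abstract restatement of it.
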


\begin{proof}
  We already observed that $\rho \leq t$. \emph{First item:} assuming the
  existence of \(h\), the set \(g\rring = \{hf\mid f \in \rring\}\) is a
  \(\uring\)-module of rank \(\nbl\), having \(h\cmons\) as a basis. Since
  $h\in \ideal$, \(h\rring\) is contained in \(\rmod\), which implies $t \leq
  \rho$ as recalled in \cref{sec:preliminaries}. Hence \(\rk=\nbl\).
  \emph{Second item:} assuming \(\rk=\nbl\), we define \(\basmat \in
  \pmatspace{\nbl}{\nbl}\) as in \cref{eqn:basis_matrix}, from a basis
  \(\{P_1,\ldots,P_\rk\} \subset \rring\) of \(\rmod\). Then, we let \(h =
  \det(\basmat) \in \uring\), which is nonzero since \(\basmat\) is
  nonsingular. By assumption, there exists \(j \in \{1,\ldots,\nbl\}\) such
  that \(\mon_j = 1\). Then, by Cramer's rule, there are \(u_1,\ldots,u_\nbl
  \in \uring\) such that \([u_1 \;\cdots\; u_\nbl] \basmat = [0
  \;\cdots\; 0 \;\, h \;\, 0 \;\cdots\; 0]\) with \(h\) at the \(j\)th
  position. By construction of \(\basmat\), this means \(u_1P_1 + \cdots +
  u_\nbl P_\nbl = h \mu_j = h\), hence \(h \in \ideal\).
\end{proof}

\subsection{Link with the multiplication matrix}
\label{sec:module:shiftedform}

In \cref{ex:main_pmat}, the basis \(\mat{P} \in \pmatspace{\nbl}{\nbl}\) of
\(\rmod\) can be seen as a compact representation of the operator of
multiplication by \(\uvar\) in \(\ring/\ideal\). The more classical
representation uses a \emph{multiplication matrix}, which is the matrix of this
operator expressed on the \(\ord\)-monomial basis. We have seen that in the
case of \cref{ex:main,ex:main_pmat}, the \(\orddrl\)-monomial basis is
$\monbas=(1,\uvar,\uvar^2,\uvar^3,x_2,x_2 \uvar,x_1,x_1 \uvar)$. Then this
multiplication matrix is
\[
  \mat{M} =
  \left[
    \begin{array}{cccc|cc|ccc}
      0 & 1 & 0 & 0 & 0 & 0 & 0 & 0 
      \\
      0 & 0 & 1 & 0 & 0 & 0 & 0 & 0 
      \\
      0 & 0 & 0 & 1 & 0 & 0 & 0 & 0 
      \\
      0 & 7 & 26 & 26 & 3 & 6 & 0 & 14 
      \\
      \hline
      0 & 0 & 0 & 0 & 0 & 1 & 0 & 0 
      \\
      12 & 0 & 26 & 0 & 14 & 1 & 10 & 24 
      \\
      \hline
      0 & 0 & 0 & 0 & 0 & 0 & 0 & 1 
      \\
      26 & 20 & 10 & 11 & 0 & 2 & 27 & 5 
  \end{array}
\right]
\in \matspace{\vsdim}{\vsdim}.
\]
The choice of ordering of \(\monbas\) makes the following structure obvious:
this matrix has companion blocks on the diagonal, and its other blocks have
zeroes everywhere but possibly on the last row. Note how the basis \(\mat{P}\)
from \cref{ex:main_pmat} can be built by replacing each block by a single
polynomial in $\uring$ (recall here \(\field=\FF_{29}\)):
\begin{itemize}
  \item companion blocks are replaced by their respective characteristic
    polynomials, for example the first companion block becomes \(y^4 - (26y^3 +
    26y^2+7y) = y^4 + 3y^3 + 3y^2 + 22y\);
  \item other blocks by are replaced by the opposite of the polynomial given by
    the last row, for example the block \((3,1)\) yields
    \(-(11y^3+10y^2+20y+26) = 18y^3 + 19y^2 + 9y + 3\).
\end{itemize}

Both this type of structure for matrices over a field and the corresponding
compact representation as univariate polynomial matrices have been studied, in
particular concerning questions of matrix similarity. For example, the
Frobenius normal form of \(\mat{M}\) corresponds to the Smith normal form of
\(\mat{P}\) \cite[Thm.\,9.1]{Storjohann00}, whereas the shifted Hessenberg form
of \(\mat{M}\) corresponds to the Hermite normal form of \(\mat{P}\)
\cite[Thm.\,9.5 and Lem.\,9.7]{Storjohann00}. More recently, such matrix
structures were instrumental in the design of fast algorithms for the Frobenius
normal form of a matrix over a field
\cite{PernetStorjohann2007,PernetStorjohann2007extended}.

However, to our knowledge, in the context of Gr\"obner basis change of
order, this structure of the multiplication matrix had only been exploited
through the sparsity it brings, in order to rely on (block-)Wiedemann
techniques \cite{FaugereM2017,Steel2015,HyunNeigerRahkooySchost2020}.

\section{Retrieving lexicographic Gr\"obner bases from Hermite normal forms}
\label{sec:hnf_gblex}

From a matrix \(\basmat\) as in \cref{eqn:basis_matrix}, whose rows in
\(\pmatspace{1}{\nbl}\) represent a basis of \(\rmod\), one can compute the
reduced Gr\"obner basis of \(\rmod\) with respect to a chosen monomial order on
\(\pmatspace{1}{\nbl}\); see \cite[Chap.\,15]{Eisenbud95} for Gr\"obner bases
of submodules of a free module with basis. Here this ambient free module is
\(\rring \simeq \pmatspace{1}{\nbl}\), with \(\uring\) univariate: specific
terminology and computational tools exist. In particular, classical reduced
Gr\"obner bases are the Hermite normal form \cite{Hermite1851} (corresponding
to the position-over-term order \cite{KojimaRapisardaTakaba2007}), the Popov
normal form \cite{Popov1972} (corresponding to the term-over-position order
\cite{KojimaRapisardaTakaba2007}), and shifted variants of the latter
\cite{BeckermannLabahnVillard2006} (corresponding to term-over-position orders
with weights \cite[Sec.\,1.3.4]{Neiger2016}). The definition of Hermite normal
forms was given in \cref{sec:preliminaries}.

However, these Gr\"obner bases of the submodule \(\rmod\) do not necessarily correspond to
Gr\"obner bases of the ideal \(\ideal\), even when \(\genby{\rmod} = \ideal\). The next
result states that, under the stability assumption, there is a correspondence
between the lexicographic Gr\"obner basis of \(\ideal\) and the basis of \(\rmod\) in
Hermite normal form. (A link of this kind is not new \cite[Sec.\,5]{Lazard1985}
\cite[Sec.\,7]{Villard2018}, yet we were not able to find a statement similar
to the next one in the literature.)

\begin{theorem}
  \label{thm:hnf_gb}
  Let \(\ideal\) be a zero-dimensional ideal of \(\ring\) and let \(\matgblex\)
  be the reduced \(\ordlex\)-Gr\"obner basis of \(\ideal\). Let \(\cmons =
  (\mon_{1},\ldots,\mon_{\nbl})\) be pairwise distinct monomials in
  \(\field[x_1,\ldots,x_{n-1}]\), sorted increasingly according to \(\ordlex\).
  Define \(\rring\) and \(\rmod\) as in \cref{eqn:rring,eqn:rmod}. Let
  \(\mat{H} \in \pmatspace{\nbl}{\nbl}\) be the basis of \(\rmod\) in Hermite
  normal form.

  Assuming \(\matgblex \subseteq \rring\), then \(\matgblex\) can be read off
  from the rows of~\(\mat{H}\). Explicitly, let \(f\) be an element of
  \(\matgblex\) and let \(i\) be the unique integer in \(\{1,\ldots,t\}\) such
  that \(\lm{\ordlex}{f}=\uvar^e\mu_i\) for some \(e\ge 0\). Then the \(i\)th row
  of \(H\) has the form \([f_1 \;\; \cdots \;\; f_i \;\; 0 \;\; \cdots \;\; 0]
  \in \pmatspace{1}{\nbl}\), with \(\deg(f_i) = e\) and \(f = f_1\mu_1 + \cdots
  + f_i\mu_i\).
\end{theorem}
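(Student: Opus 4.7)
The plan is to identify each \(f \in \matgblex\) with the appropriate row of \(\mat H\) via the uniqueness of the Hermite normal form. I would first analyze the shape of an element \(f \in \matgblex\) when expressed on the basis \(\cmons\), then invoke a standard description of the rows of \(\mat H\) through coefficient ideals, and finally match the two.

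Writing \(f = f_1\mon_1 + \cdots + f_\nbl\mon_\nbl\) with each \(f_j \in \uring\) (possible since \(\matgblex \subseteq \rring\)), I use that \(\cmons\) is sorted increasingly for \(\ordlex\) and that \(\uvar\) is \(\ordneqlex\)-smaller than \(x_1,\ldots,x_{n-1}\); hence \(\lm{\ordlex}{f} = \uvar^{\deg(f_i)}\mon_i\) where \(i\) is the largest index with \(f_i \ne 0\), and \(f_i\) is monic since \(f\) belongs to a reduced \gb. Two elements of \(\matgblex\) with the same \(\mu\)-part in their leading monomials would have one leading monomial dividing the other, contradicting reducedness; thus \(i\) is determined by \(f\) uniquely and distinct \(f\)'s map to distinct \(i\)'s. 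This establishes the existence and uniqueness of \(i\), the identity \(\deg(f_i) = e\), and the zero pattern of the row.

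Since \(\ideal\) is zero-dimensional, \cref{lem:zerodim_rank} gives \(\rk(\rmod) = \nbl\), so \(\mat H \in \pmatspace{\nbl}{\nbl}\) is nonsingular. For each index \(k\), let \(N_k = \rmod \cap (\uring\mon_1 + \cdots + \uring\mon_k)\) and let \(\mathcal{J}_k \subseteq \uring\) be the ideal formed by the \(\mon_k\)-coefficients of elements of \(N_k\). A standard property of lower-triangular Hermite forms gives that the diagonal entry \(p_k\) of \(\mat H\) is the monic generator of \(\mathcal{J}_k\), and that row \(k\) of \(\mat H\) is characterized uniquely as the only element of \(N_k\) whose \(\mon_k\)-coefficient equals \(p_k\) and whose earlier coefficients \(h_j\) satisfy \(\deg h_j < \deg p_j\) for all \(j < k\).

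The crux is to verify that, for each \(f \in \matgblex\) with associated index \(i\), the coefficient vector of \(f\) satisfies both conditions for \(k = i\). The main obstacle is the diagonal identity \(f_i = p_i\); I would establish it by showing that every \(g \in \mathcal{J}_i\) is divisible by \(f_i\). Writing \(g = q f_i + r\) with \(\deg r < e = \deg f_i\), and letting \(\tilde g \in N_i\) be any preimage of \(g\), the difference \(\tilde g - q f \in N_i\) has \(\mon_i\)-coefficient \(r\); if \(r \ne 0\), then \(\lm{\ordlex}{\tilde g - q f} = \uvar^{\deg r}\mon_i\) lies in \(\genby{\lt{\ordlex}{\ideal}}\) and must be divisible by some leading monomial \(\uvar^{e_k}\mon_k\) of an element of \(\matgblex\). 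The case \(k = i\) forces \(e \le \deg r < e\), a contradiction; the case \(k \ne i\) makes \(\mon_k\) a strict divisor of \(\mon_i\), and reducedness then forces \(e_k > e\), whereas divisibility gives \(e_k \le \deg r < e\), another contradiction. Hence \(r = 0\), and \(f_i\) is the monic generator \(p_i\). The off-diagonal bound \(\deg f_j < \deg p_j\) for \(j < i\) follows similarly: if \(e' \ge \deg p_j\) for some term \(\uvar^{e'}\mon_j\) of \(f\), then the leading monomial \(\uvar^{\deg p_j}\mon_j\) of any lift of \(p_j\) is divisible by some leading monomial \(\uvar^{e_k}\mon_k\) from \(\matgblex\), and the same divisor then divides \(\uvar^{e'}\mon_j\), placing this non-leading term of \(f\) into \(\genby{\lt{\ordlex}{\ideal}}\) and contradicting reducedness. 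The uniqueness statement in the previous paragraph then identifies row \(i\) of \(\mat H\) with the coefficient vector of \(f\), as claimed.
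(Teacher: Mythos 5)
Your proof is correct and follows essentially the same route as the paper's: express \(f\) on the basis \(\cmons\), establish the zero pattern and the bounds \(\deg f_j < \deg p_j\) for \(j<i\) from reducedness, pin down the diagonal entry, and conclude by the uniqueness of the Hermite normal form. The only real variation is in identifying \(f_i\) with the diagonal entry \(p_i\): you use the coefficient-ideal characterization of the Hermite form together with Euclidean division, whereas the paper bounds the diagonal degree above and below via the triangular row-space factorization and the minimality of the reduced \gb --- but the contradictions invoked are the same underlying facts, so this is a repackaging rather than a different argument.
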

\begin{proof}
  In this proof, \(\ord\) stands for the lexicographic order \(\ordlex\).

  Let \(f\) be an element of \(\matgblex\). Since \(\matgblex \subseteq
  \rring\), every monomial of \(f\) belongs to \(\cmonscomp = \{\uvar^e \mon_j
  \mid 1\le j\le \nbl, e \ge 0\}\). In particular,
  \(\lm{\ord}{f}=\uvar^e\mu_i\) for some \(i\) in \(\{1,\ldots,t\}\) and \(e\ge
  0\) (and \(i\) is unique since the \(\mu_j\)'s are pairwise distinct).

  Since \(f \in \rring\), and \(\cmons\) is a basis of \(\rring\) as a
  \(\uring\)-module, there is a unique \([f_1 \;\;\cdots\;\; f_\nbl] \in
  \pmatspace{1}{\nbl}\) such that \(f = f_{1}\mu_1 + \cdots + f_{\nbl}
  \mu_\nbl\).

  Let \(j \in \{i+1,\ldots,\nbl\}\). We are going to prove \(f_j = 0\). Recall
  that \(\uvar \ordneq x_k\) for \(1 \le k \le n-1\), and that the monomial
  \(\mu_j\) only involves the variables \(x_1,\ldots,x_{n-1}\). Besides,
  \(\cmons\) being sorted increasingly ensures \(\mu_i \ordneq \mu_j\). Hence
  \(\uvar^e\mu_i \ordneq \uvar^{d} \mu_j\) for any \(d\ge 0\): having \(f_j\neq
  0\) would contradict \(\lm{\ord}{f}=\uvar^e\mu_i\).

  Thus \(f = f_1 \mu_1 + \cdots + f_i \mu_i\), and
  \(\lm{\ord}{f}=\uvar^e\mu_i\) ensures \(\deg(f_i) = e\). It remains to show
  that the \(i\)th row of \(H\) is equal to \([f_1 \cdots f_i \;\, 0
  \,\cdots\, 0]\).

  We first show that the \(i\)th diagonal entry of \(H\) has degree \(e =
  \deg(f_i)\). On the one hand, the \(i\)th row of \(H\) corresponds to a
  nonzero polynomial in \(\ideal\) whose \(\ord\)-leading term is
  \(\uvar^{d}\mu_i\). Then, having \(d < e\) would mean that
  \(\lt{\ord}{f}\) is a strict multiple of the \(\ord\)-leading term of some
  element of \(\ideal\), which contradicts the definition of \(\matgblex\).
  Thus \(d \ge e\). On the other hand, \(f\) is in \(\rmod\) and
  therefore corresponds to a vector in the \(\uring\)-row space of \(H\) whose
  rightmost nonzero entry is at index \(i\). Then, the triangularity of \(H\)
  implies that
  \[
    [f_1 \;\; \cdots \;\; f_i \;\; 0 \;\; \cdots \;\; 0]
    =
    [\lambda_1 \;\; \cdots \;\; \lambda_i \;\; 0 \;\; \cdots \;\; 0]
    H
  \]
  for some \(\lambda_1,\ldots,\lambda_i \in \uring\) and \(\lambda_i \neq 0\).
  Using the triangularity again, we obtain \(e = \deg(f_i) = \deg(\lambda_i) +
  d\), hence \(d \le e\). This yields \(e=d\).

  Let \(d_1,\ldots,d_{i-1} \in \NN\) be the degrees of the first \(i-1\)
  diagonal entries of \(H\). In this paragraph we show that, to conclude the
  proof, it is enough to prove \(\deg(f_j) < d_j\) for \(1\le j < i\). Indeed,
  as seen above, the vector \([f_1 \;\; \cdots \;\; f_i \;\; 0 \;\; \cdots \;\;
  0]\) is in the \(\uring\)-row space of \(H\), and has rightmost nonzero entry
  \(f_i\) at index \(i\), which has the same degree as the \(i\)th diagonal
  entry of \(H\) and is monic by definition of a reduced
  \(\ord\)-Gr\"obner basis. Thus, if \(\deg(f_j) < d_j\) for \(1\le j < i\), then
  this vector must 
  be equal to the \(i\)th row of \(H\), by uniqueness of the Hermite normal
  form: otherwise one could replace the \(i\)th row of \(H\) by this vector and
  get a different Hermite normal form for the same \(\uring\)-module.

  Let \(1\le j < i\). We are going to prove \(\deg(f_j) < d_j\). The \(j\)th
  row of \(H\) yields a polynomial in \(\ideal\) with \(\ord\)-leading term
  \(y^{d_j}\mon_j\), hence \(y^{d_j}\mon_j \in \genby{\lt{\ord}{\ideal}}\). At
  the same time, since \(\matgblex\) is reduced, \(\lt{\ord}{f} = \uvar^e
  \mon_i\) is the only monomial appearing in \(f\) which is in
  \(\genby{\lt{\ord}{\ideal}}\). In particular, defining \(d=\deg(f_j)\), the
  monomial \(\uvar^d\mon_j\) of \(f\) is not a multiple of or equal to
  \(y^{d_j}\mon_j\), hence \(d < d_j\).
\end{proof}

\begin{example}[following on from \cref{ex:main_pmat}]
  \label{ex:main_gblex}
  Computing the Hermite normal form of the basis matrix \(\basmat\) from
  \cref{ex:main_pmat} yields
  \[
    \small
    \mat{H} =
    \begin{bmatrix}
      \uvar^{8}+26 \uvar^{7}+8 \uvar^{6}+17 \uvar^{5}+19 \uvar^{4}+\uvar^{3}+28 \uvar^{2}+20 \uvar +18 & 0 & 0 
      \\
      28 \uvar^{7}+23 \uvar^{6}+17 \uvar^{5}+25 \uvar^{4}+24 \uvar^{3}+17 \uvar^{2}+14 \uvar +4 & 1 & 0 
      \\
      6 \uvar^{7}+13 \uvar^{6}+22 \uvar^{5}+12 \uvar^{4}+28 \uvar^{3}+24 \uvar^{2}+26 \uvar +14 & 0 & 1 
    \end{bmatrix}.
  \]
  This is the basis of \(\rmod\) in Hermite normal form; recall that here
  \(\rmod\) is the $\uring$-submodule of polynomials in \(\ideal\) which have
  the form $p_1(\uvar) + p_2(\uvar) x_2 + p_3(\uvar) x_1$. This basis gives the
  \(\ordlex\)-Gr\"obner basis of $\ideal$:
  \begin{align*}
  \uvar^{8}+26 \uvar^{7}+8 \uvar^{6}+17 \uvar^{5}+19 \uvar^{4}+\uvar^{3}+28 \uvar^{2}+20 \uvar +18
  ,\\
  x_2 +28 \uvar^{7}+23 \uvar^{6}+17 \uvar^{5}+25 \uvar^{4}+24 \uvar^{3}+17 \uvar^{2}+14 \uvar +4
  ,\\
  x_1 +6 \uvar^{7}+13 \uvar^{6}+22 \uvar^{5}+12 \uvar^{4}+28 \uvar^{3}+24 \uvar^{2}+26 \uvar +14.
    & \qed
  \end{align*}
\end{example}

Suppose the basis \(\mat{H}\) of \(\rmod\) in Hermite normal form is known. If
\(\lt{\ord}{\matgblex}\) is known as well, which is the case under the
shape position assumption, then \cref{thm:hnf_gb} indicates precisely which rows of
\(\mat{H}\) give the polynomials of \(\matgblex\), without any further
computation.

\begin{remark}
  Even when \(\lt{\ordlex}{\matgblex}\) is unknown, \(\matgblex\) is easily
  found from \(\mat{H}\). Indeed, \(\mat{H}\) yields polynomials
  \(h_1,\ldots,h_\nbl\) in \(\rmod \subseteq \ideal\), and \cref{thm:hnf_gb}
  ensures that they include the polynomials of \(\matgblex\). Thus
  \(\{h_1,\ldots,h_\nbl\}\) is a \(\ordlex\)-Gr\"obner basis of \(\ideal\), and
  filtering out from it the polynomials which are not in \(\matgblex\) is
  easily done and computationally cheap, by following the classical procedure
  for transforming a non-minimal Gr\"obner basis into a minimal one.
  Explicitly:
  \begin{itemize}
    \item let \(\nu_i = \lt{\ordlex}{h_i}\) for \(1 \le i \le \nbl\);
    \item find the indices \(1 \le i_1 < \cdots < i_s \le \nbl\) such that
      \(\{\nu_{i_1},\ldots,\nu_{i_s}\}\) is a minimal generating set of the
      monomial ideal \(\genby{\nu_1,\ldots,\nu_\nbl}\);
    \item then \(\matgblex = \{h_{i_1},\ldots,h_{i_s}\}\).
  \end{itemize}
  Note that the uniqueness of the indices \(i_1,\ldots,i_s\) is ensured by the
  fact that \(\nu_1,\ldots,\nu_\nbl\) are pairwise distinct by construction.
\end{remark}

\section{Constructing a basis of the module from a known Gr\"obner basis}
\label{sec:gb_popov}

There are two missing ingredients in order to use the above framework to
compute \(\matgblex\). First, the assumptions of \cref{thm:hnf_gb} must be
satisfied. Second, we need an efficient method to compute the basis \(\mat{H}\)
of \(\rmod\) in Hermite normal form; for this, known methods require the
knowledge of some basis \(\mat{P} \in \pmatspace{\nbl}{\nbl}\) of \(\rmod\).

The assumption that \(\ideal\) is zero-dimensional will be guaranteed from our
context. The main constraint is therefore the choice of \(\cmons\) in order to
ensure that \(\matgblex \subset \rring\) is satisfied. This relates to the more
general equality \(\genby{\rmod} = \ideal\), via the following
characterization: \(\genby{\rmod} = \ideal\) if and only if there exists a
generating set of \(\ideal\) formed by polynomials in \(\rring\).
Obviously, taking \(\cmons\) large enough ensures \(\matgblex \subset \rring\);
yet a larger set \(\cmons\) also means a larger matrix dimension \(\nbl\) and
thus more expensive computations to find \(\mat{P}\) and deduce \(\mat{H}\).

Now, focusing on the shape position case as explained in
\cref{sec:intro}, all monomials occurring in \(\matgblex\) are either in
\(\{x_{n-1},\ldots,x_1\}\) or in \(\{y^e \mid e \ge 0\}\). Hence the following
lemma.

\begin{lemma}
  \label{lem:shape_is_fine}
  Using notation from \cref{thm:hnf_gb}, assume the ideal \(\ideal\) is
  zero-dimensional and in shape position. If \(\{1,x_{n-1},\ldots,x_{1}\}
  \subseteq \cmons\), then \(\matgblex \subset \rring\).
\end{lemma}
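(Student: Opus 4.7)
The plan is to directly verify that every polynomial appearing in the reduced $\ordlex$-\gb has monomial support contained in $\cmonscomp = \{y^e \mu \mid \mu \in \cmons,\, e \ge 0\}$, which is exactly the definition of membership in $\rring$ given in \cref{eqn:rring}. Since $\ideal$ is zero-dimensional and in \shape, \cref{eqn:shape_position} tells us that $\matgblex$ consists of exactly $n$ polynomials: one univariate polynomial $h(\uvar) \in \uring$, and the $n-1$ polynomials $x_i - g_i(\uvar)$ for $1 \le i \le n-1$, where each $g_i$ lies in $\uring$. So the verification reduces to inspecting the support of these two types of polynomials.

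First I would handle $h(\uvar)$: its support consists only of pure powers of $\uvar$, and since $1 \in \cmons$ by hypothesis, each $\uvar^e$ can be written as $\uvar^e \cdot 1 \in \cmonscomp$, so $h(\uvar) \in \rring$. Next, for each $i \in \{1,\ldots,n-1\}$, the support of $x_i - g_i(\uvar)$ consists of the monomial $x_i$ together with some powers of $\uvar$ coming from $g_i$. The hypothesis $\{1, x_{n-1}, \ldots, x_1\} \subseteq \cmons$ guarantees both that $x_i = \uvar^0 \cdot x_i \in \cmonscomp$ and that the powers of $\uvar$ lie in $\cmonscomp$ as above, so $x_i - g_i(\uvar) \in \rring$.

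Combining the two cases shows $\matgblex \subset \rring$, which is the claim. There is essentially no obstacle here: the lemma is a direct unpacking of the definition of \shape together with the chosen inclusion hypothesis on $\cmons$, and no further analysis of the structure of $\ideal$ or of the $g_i$'s is required.
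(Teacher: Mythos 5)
Your proof is correct and matches the paper's own (very brief) justification: the paper simply observes that under the \shape assumption every monomial occurring in \(\matgblex\) is either one of \(x_1,\ldots,x_{n-1}\) or a power of \(\uvar\), so the hypothesis \(\{1,x_{n-1},\ldots,x_1\}\subseteq\cmons\) immediately gives \(\matgblex\subset\rring\). Your case-by-case unpacking of \cref{eqn:shape_position} is exactly this argument spelled out.
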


Therefore, in the shape position case, the condition \(\matgblex \subseteq
\rring\) is easily satisfied, and the main missing ingredient is an efficient
method for computing \(\mat{P}\). The next theorem shows that, for any monomial
order \(\ord\), the knowledge of some \emph{\(\ord\)-border basis} of
\(\ideal\) \cite{MaMoMo93} directly provides a suitable set \(\cmons\) and a
corresponding basis \(\basmat \in \pmatspace{\nbl}{\nbl}\) of \(\rmod\).
Furthermore, this matrix \(\basmat\) has a particular degree pattern related to
the \(\ord\)-monomial basis of \(\ideal\).

In \cref{cor:gbdrl_is_rmod_basis} we deduce that, under the stability
assumption \(\hyp{\lt{\ord}{\ideal}}\), the knowledge of the reduced
\(\ord\)-Gr\"obner basis of \(\ideal\) is enough to find \(\cmons\) and \(\basmat\). Then,
it will only remain to find the Hermite normal form of \(\basmat\), which is
the sought basis \(\mat{H}\): the efficient computation of \(\mat{H}\) from
\(\mat{P}\) is discussed in \cref{sec:impact:compute_hnf}.

\begin{theorem}
  \label{thm:border_rmod_basis}
  Let \(\ord\) be a monomial order such that \(\uvar \ordneq x_i\) for \(1 \le
  i \le \nvar-1\). Let \(\ideal\) be a zero-dimensional ideal in \(\ring\) and
  let \(\monbas\) be the \(\ord\)-monomial basis of \(\ring/\ideal\). Let
  \(\cmons = (\mon_1,\ldots,\mon_\nbl)\) be the monomials in \(\monbas\) which
  are not divisible by \(\uvar\), i.e.~\(\cmons = \monbas \cap
  \field[x_1,\ldots,x_{n-1}]\).  Then,
  \begin{equation}
    \label{eqn:border}
    \{\mon \in \monbas \mid y\mon \not\in\monbas\} = \{y^{e_i-1}\mon_i \mid 1 \le i \le \nbl\}
  \end{equation}
  for some \(e_1,\ldots,e_\nbl \in \ZZp\) with \(e_1+\cdots+e_\nbl =
  \dim_\field(\ring/\ideal)\), and
  \[
    \mathcal{P} = \{y^{e_i}\mon_i - \nf{\ord,\ideal}{y^{e_i}\mon_i} \mid 1 \le i \le \nbl\}
  \]
  is a basis of \(\rmod\) as a \(\uring\)-module.

  Furthermore, representing \(\mathcal{P}\) as a matrix \(\mat{P} \in
  \pmatspace{\nbl}{\nbl}\) whose \(i\)th row contains the coefficients of
  \(y^{e_i}\mon_i - \nf{\ord,\ideal}{y^{e_i}\mon_i}\) on the basis \(\cmons\)
  of \(\rring\), it holds that \(\mat{P} = \diag{\uvar^{e_1},\ldots,\uvar^{e_\nbl}} +
  \mat{R}\) where \(\mat{R} \in \pmatspace{\nbl}{\nbl}\), with the \(j\)th
  column of \(\mat{R}\) of degree less then \(e_j\) for \(1\le j\le \nbl\).
\end{theorem}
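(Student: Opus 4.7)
The statement has four pieces to prove: the identification of the border monomials, the dimension equality $e_1+\cdots+e_\nbl = \dim_\field(\ring/\ideal)$, the membership of the $p_i$'s in $\rmod$ (together with the claimed shape of $\mat{P}$), and finally the fact that they form a $\uring$-basis of $\rmod$. I would first handle the monomial-combinatorics part, then the matrix-shape part, and leave the basis property for last, since it is the only step requiring a nontrivial structural argument.

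For the monomial part, I would define $e_i$ as the smallest positive integer such that $\uvar^{e_i}\mon_i \notin \monbas$; such $e_i$ exists because $\ring/\ideal$ is finite-dimensional, and $e_i \ge 1$ since $\mon_i \in \monbas$. The crucial property of $\monbas$ is closure under division: if $\mon \in \monbas$ and $\mon' \mid \mon$, then $\mon' \in \monbas$. In particular each $\uvar^k\mon_i$ with $0 \le k < e_i$ lies in $\monbas$, so $\uvar^{e_i-1}\mon_i$ is a border element in the sense of \cref{eqn:border}. Conversely, given any $\mon \in \monbas$ with $\uvar\mon \notin \monbas$, write $\mon = \uvar^d \mon'$ with $\mon'$ not divisible by $\uvar$; then $\mon' \in \monbas \cap \field[x_1,\ldots,x_{n-1}] = \cmons$, so $\mon' = \mon_i$ for some $i$, and a small verification shows $d = e_i-1$. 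This proves \cref{eqn:border}. The same argument shows $\monbas = \bigsqcup_{i=1}^{\nbl} \{\mon_i, \uvar\mon_i, \ldots, \uvar^{e_i-1}\mon_i\}$, giving $\dim_\field(\ring/\ideal) = |\monbas| = e_1+\cdots+e_\nbl$.

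For the matrix-shape part, note that $p_i = \uvar^{e_i}\mon_i - \nf{\ord,\ideal}{\uvar^{e_i}\mon_i}$ is in $\ideal$ by definition of the normal form. Its support is contained in $\{\uvar^{e_i}\mon_i\} \cup \monbas$, and by the partition above every element of $\monbas$ has the form $\uvar^k\mon_j$ for some $j$ and some $k < e_j$. Hence $\supp(p_i) \subset \cmonscomp$ so $p_i \in \rmod$. Expressing $p_i = \sum_j c_{ij}(\uvar)\mon_j$, the contribution of $\uvar^{e_i}\mon_i$ gives $c_{ii}$ the term $\uvar^{e_i}$, while the contribution of $-\nf{\ord,\ideal}{\uvar^{e_i}\mon_i}$ contributes to $c_{ij}$ only powers $\uvar^k$ with $k < e_j$. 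This yields the decomposition $\mat{P} = \diag{\uvar^{e_1},\ldots,\uvar^{e_\nbl}} + \mat{R}$ with $\deg_j(\mat{R}) < e_j$, as claimed.

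The main obstacle is proving that $\{p_1,\ldots,p_\nbl\}$ is an actual $\uring$-basis of $\rmod$. Linear independence is easy: from the column-degree structure just established, $\det(\mat{P})$ is monic of degree $e_1+\cdots+e_\nbl$, hence nonzero. For the generation property I would use a dimension-counting argument rather than a direct division. Let $\module[N] \subseteq \rmod$ denote the $\uring$-submodule generated by $p_1,\ldots,p_\nbl$; since $\mat{P}$ is nonsingular, $\dim_\field(\rring/\module[N]) = \deg \det(\mat{P}) = e_1+\cdots+e_\nbl$. The short exact sequence
\[
  0 \to \rmod/\module[N] \to \rring/\module[N] \to \rring/\rmod \to 0
\]
reduces the problem to showing $\dim_\field(\rring/\rmod) = e_1+\cdots+e_\nbl = \dim_\field(\ring/\ideal)$. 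For this, consider the natural $\field$-linear map $\rring \to \ring/\ideal$, whose kernel is exactly $\rmod = \rring \cap \ideal$; it is surjective because every element of $\monbas$ is of the form $\uvar^k\mon_j \in \cmonscomp$ and hence lies in $\rring$. This forces $\dim_\field(\rmod/\module[N]) = 0$, so $\module[N] = \rmod$, completing the proof.
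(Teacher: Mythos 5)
Your proposal is correct. The first three pieces (the border identity \cref{eqn:border}, the partition \(\monbas=\bigsqcup_i\{\uvar^k\mon_i \mid 0\le k<e_i\}\) giving \(e_1+\cdots+e_\nbl=\dim_\field(\ring/\ideal)\), the membership \(\mathcal{P}\subseteq\rmod\), and the column-degree structure of \(\mat{P}\)) follow essentially the same lines as the paper. Where you genuinely diverge is the generation step. The paper proves it by reduction: an induction on \(k\) (its \cref{lem:monomial_uring_combi}) shows that each \(\uvar^{e_i+k}\mon_i\) is congruent modulo \(\spanby{\mathcal{P}}\) to an element of \(\linspan{\monbas}\), so any \(f\in\rmod\) writes as \(f=b+p\) with \(b\in\linspan{\monbas}\), \(p\in\spanby{\mathcal{P}}\), and then \(b=\nf{\ord,\ideal}{f}=0\); the cardinality-equals-rank argument (via \cref{lem:zerodim_rank}) then upgrades the generating set to a basis. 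You instead count dimensions: since \(\mat{P}\) is column reduced with column degrees \(e_1,\ldots,e_\nbl\), its determinant is monic of degree \(e_1+\cdots+e_\nbl\), so \(\dim_\field(\rring/\spanby{\mathcal{P}})=\deg\det\mat{P}=e_1+\cdots+e_\nbl\); on the other hand the evaluation map \(\rring\to\ring/\ideal\) is surjective (because \(\monbas\subset\cmonscomp\)) with kernel \(\rmod\), so \(\dim_\field(\rring/\rmod)=\dim_\field(\ring/\ideal)\), and the exact sequence forces \(\spanby{\mathcal{P}}=\rmod\). Both arguments are sound. Yours buys nonsingularity of \(\mat{P}\) and the basis property in one stroke, without the auxiliary induction, and isolates the conceptually useful isomorphism \(\rring/\rmod\cong\ring/\ideal\); the price is an appeal to the standard (Smith-form) fact that \(\deg\det\) computes the \(\field\)-codimension of a full-rank \(\uring\)-submodule, which the paper's more elementary and constructive reduction avoids.
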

\begin{proof}
  First note that, since \(\ideal\) is zero-dimensional, \(\monbas\) is finite
  and therefore \(\cmons\) is finite as well. Concerning the identity in
  \cref{eqn:border} we first observe that, since \(\monbas\) is finite and is
  the complement of \(\lt{\ord}{\ideal}\), for each \(i \in \{1,\ldots,\nbl\}\)
  there is a unique \(e\in\ZZp\) such that \(\uvar^{e-1}\mon_i \in \monbas\)
  and \(\uvar^{e}\mon_i \not\in\monbas\). Conversely, for any
  \(\mon\in\monbas\) such that \(\uvar\mon \not\in \monbas\), the integer \(e =
  1 + \max \{j\in\NN \mid \uvar^j \text{ divides } \mon\}\) satisfies
  \(\uvar^{1-e}\mon \in \cmons\), hence \(\mon = \uvar^{e-1}\mon_i\) for some
  \(i\). This shows \cref{eqn:border}.

  Now, concerning \(\mathcal{P}\), its elements are in \(\ideal\) by definition
  of the \(\ord\)-normal form (see \cref{sec:preliminaries}), hence
  \(\mathcal{P} \subseteq \rmod\). Furthermore \(\mathcal{P}\) has cardinality
  \(\nbl\), which is the cardinality of \(\cmons\) and therefore the rank of
  \(\rmod\) (see \cref{lem:zerodim_rank}).  Thus, to prove that \(\mathcal{P}\)
  is a basis of \(\rmod\) it is sufficient to show that any polynomial \(f\) in
  \(\rmod\) is a \(\ring\)-linear combination of \(\mathcal{P}\), that is, \(f
  \in \spanby{\mathcal{P}}\). Since \(f\in\rmod\),
  \begin{align*}
    f & \in  \linspan{\{\uvar^{e}\mon_i \mid 1\le i \le\nbl, e\in\NN\}} \\
      & \;\;\;= \linspan{\monbas \cup \{\uvar^{e_i+k}\mon_i \mid 1\le i \le\nbl, k\in\NN\}}.
  \end{align*}
  On the other hand, as showed in \cref{lem:monomial_uring_combi},
  \(\uvar^{e_i+k}\mon_i - b_{i,k} \in \spanby{\mathcal{P}}\) for some \(b_{i,k}
  \in \linspan{\monbas}\), for all \(1\le i\le \nbl\) and \(k\in\NN\).
  Altogether, this implies that \(f = b + p\), for some \(b \in
  \linspan{\monbas}\) and \(p \in \spanby{\mathcal{P}}\). Since
  \(\spanby{\mathcal{P}} \subseteq \rmod\), we have \(f - p \in \rmod \subseteq
  \ideal\) and therefore \(b = \nf{\ord,\ideal}{f} = 0\). Hence \(f \in
  \spanby{\mathcal{P}}\).

  Finally, consider the matrix representation \(\mat{P}\) of \(\mathcal{P}\).
  As seen in \cref{sec:module:module}, the \(i\)th row of \(\mat{P}\) is
  the vector \([p_1 \;\; \cdots \;\; p_\nbl] \in \pmatspace{1}{\nbl}\) such
  that \(y^{e_i}\mon_i - \nf{\ord,\ideal}{y^{e_i}\mon_i} = p_1 \mon_1 + \cdots
  + p_\nbl \mon_\nbl\). Therefore all monomials of \(p_1 \mon_1 + \cdots +
  p_\nbl \mon_\nbl - y^{e_i}\mon_i\) are in \(\monbas\). By definition of the
  \(e_j\)'s, it follows that \(\deg(p_j) < e_j\) for all \(j\neq i\), and
  \(\deg(p_i-\uvar^{e_i}) < e_i\). This shows that the \(j\)th column of
  \(\mat{R} = \mat{P} - \diag{\uvar^{e_1},\ldots,\uvar^{e_\nbl}}\) has degree
  less than \(e_j\), for \(1\le j\le\nbl\).
\end{proof}

\begin{lemma}
  \label{lem:monomial_uring_combi}
  Using notation from \cref{thm:border_rmod_basis}, for all \(k\in\NN\) and \(i \in \{1,\ldots,\nbl\}\), \(\uvar^{e_i+k}\mon_i
  - b_{i,k} \in \spanby{\mathcal{P}}\), where we have defined \(b_{i,k} =
  \nf{\ord,\ideal}{\uvar^{e_i+k}\mon_i} \in \linspan{\monbas}\).
\end{lemma}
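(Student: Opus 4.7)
The plan is to proceed by induction on $k$, treating this as a statement about successively multiplying by $\uvar$ and reducing modulo the relations in $\mathcal{P}$.

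\textbf{Base case} ($k=0$): By the very definition of $\mathcal{P}$, the element $\uvar^{e_i}\mon_i - b_{i,0} = \uvar^{e_i}\mon_i - \nf{\ord,\ideal}{\uvar^{e_i}\mon_i}$ is one of the generators in $\mathcal{P}$, hence it lies in $\spanby{\mathcal{P}}$.

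\textbf{Inductive step}: Assume $\uvar^{e_i+k}\mon_i - b_{i,k} \in \spanby{\mathcal{P}}$. Multiplying by $\uvar$, which is valid because $\spanby{\mathcal{P}}$ is a $\uring$-module, yields $\uvar^{e_i+k+1}\mon_i - \uvar\, b_{i,k} \in \spanby{\mathcal{P}}$. It thus remains to show that $\uvar\, b_{i,k}$ can be reduced to $b_{i,k+1}$ modulo $\spanby{\mathcal{P}}$. The key step is to analyze what multiplying by $\uvar$ does to a monomial of $\monbas$: writing $b_{i,k}$ as a $\field$-linear combination of monomials $\nu \in \monbas$, I would split cases for each such $\nu$. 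If $\uvar\nu \in \monbas$, this contribution already lies in $\linspan{\monbas}$. Otherwise, by \cref{eqn:border} we necessarily have $\nu = \uvar^{e_\ell-1}\mon_\ell$ for some $\ell$, so that $\uvar\nu = \uvar^{e_\ell}\mon_\ell$; then $\uvar\nu - \nf{\ord,\ideal}{\uvar\nu}$ is an element of $\mathcal{P}$, and this contribution lies in $\linspan{\monbas} + \spanby{\mathcal{P}}$.

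Combining both cases linearly, we obtain a decomposition $\uvar\, b_{i,k} = q + p$ with $q \in \linspan{\monbas}$ and $p \in \spanby{\mathcal{P}}$. Substituting back gives $\uvar^{e_i+k+1}\mon_i - q \in \spanby{\mathcal{P}} \subseteq \ideal$. Since $q \in \linspan{\monbas}$, the uniqueness of the $\ord$-normal form (the defining property recalled in \cref{sec:preliminaries}) forces $q = \nf{\ord,\ideal}{\uvar^{e_i+k+1}\mon_i} = b_{i,k+1}$, concluding the induction.

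The main obstacle is the case analysis on whether $\uvar\nu$ remains in $\monbas$, and making the two reductions (``stay in $\monbas$'' vs.\ ``cross the border via a generator of $\mathcal{P}$'') mesh cleanly with the normal-form uniqueness argument at the end; everything else is bookkeeping via $\uring$-linearity of $\spanby{\mathcal{P}}$.
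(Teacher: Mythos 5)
Your proof is correct and follows essentially the same route as the paper's: induction on \(k\), the observation via \cref{eqn:border} that \(\uvar\monbas \subseteq \monbas \cup \{\uvar^{e_j}\mon_j \mid 1\le j\le\nbl\}\) so that \(\uvar b_{i,k}\) decomposes as an element of \(\linspan{\monbas}\) plus an element of \(\spanby{\mathcal{P}}\), and the uniqueness of the \(\ord\)-normal form to identify the \(\linspan{\monbas}\) part with \(b_{i,k+1}\). The only difference is cosmetic bookkeeping (the paper first identifies \(b_{i,k}\) with \(\nf{\ord,\ideal}{\uvar b_{i,k-1}}\) and then decomposes, whereas you decompose first and identify afterwards).
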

\begin{proof}
  We prove this by induction on \(k\), noting that this property holds for
  \(k=0\) by definition of \(\mathcal{P}\). Now, consider \(k\in\ZZp\) and
  suppose the property holds for all integers up to \(k-1\). Let \(i \in
  \{1,\ldots,\nbl\}\). By induction hypothesis there exists \(p \in
  \spanby{\mathcal{P}}\) such that \(\uvar^{e_i+k-1}\mon_i = b_{i,k-1} + p\).
  Then \(\uvar^{e_i+k}\mon_i = \uvar b_{i,k-1} + \uvar p\), with \(\uvar p
  \in \spanby{\mathcal{P}}\) and therefore \(b_{i,k} =
  \nf{\ord,\ideal}{\uvar^{e_i+k}\mon_i} = \nf{\ord,\ideal}{\uvar
  b_{i,k-1}}\).

  It remains to prove that \(\uvar b_{i,k-1}\) is the sum of an element of
  \(\spanby{\mathcal{P}}\) and one of \(\linspan{\monbas}\) (the latter must
  then be \(b_{i,k}\) by uniqueness). This follows from the facts that
  \[ 
    \uvar b_{i,k-1} \in \linspan{\uvar \monbas}
    \subseteq \linspan{\monbas \cup \{y^{e_j}\mon_j \mid 1 \le j \le \nbl\}},
  \]
  and that the elements of \(\mathcal{P}\) are \(\{y^{e_j}\mon_j - b_{j,0} \mid
  1 \le j \le \nbl\}\) with \(b_{j,0} \in \linspan{\monbas}\); indeed these
  imply more precisely that \(\uvar b_{i,k-1}\) is the sum of an element of
  \(\linspan{\mathcal{P}}\) and one of \(\linspan{\monbas}\).
\end{proof}


\begin{corollary}
  \label{cor:gbdrl_is_rmod_basis}
  Using notation from \cref{thm:border_rmod_basis}, assume further
  \(\hyp{\lt{\ord}{\ideal}}\), let \(\matgb\) be the reduced \(\ord\)-Gr\"obner basis of
  \(\ideal\), and let \(f_1,\ldots,f_s\) be the elements of \(\matgb\) whose
  \(\ord\)-leading term is divisible by \(\uvar\). Then \(\{f_1,\ldots,f_s\}\)
  is a basis of \(\rmod\) as a \(\uring\)-module.
\end{corollary}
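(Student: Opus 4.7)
The plan is to show that the set $\{f_1,\ldots,f_s\}$ coincides, after relabeling, with the basis $\mathcal{P} = \{\uvar^{e_i}\mon_i - \nf{\ord,\ideal}{\uvar^{e_i}\mon_i} \mid 1 \le i \le \nbl\}$ of $\rmod$ provided by \cref{thm:border_rmod_basis}. The hinge of the argument is to prove that the $\ord$-leading terms of $f_1,\ldots,f_s$ are exactly $\uvar^{e_1}\mon_1,\ldots,\uvar^{e_\nbl}\mon_\nbl$; once this is established, reducedness of $\matgb$ forces the two sets of polynomials to match term by term.

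Since $\matgb$ is the reduced \(\ord\)-\gb, $\lt{\ord}{\matgb}$ is exactly the minimal generating set of the monomial ideal $\genby{\lt{\ord}{\ideal}}$. So the leading-term identification reduces to showing that
\[
  \{\lt{\ord}{f_i} \mid 1 \le i \le s\} = \{\uvar^{e_j}\mon_j \mid 1\le j\le\nbl\}.
\]
The reverse inclusion is the easier direction: given a minimal generator $\nu$ of $\genby{\lt{\ord}{\ideal}}$ divisible by $\uvar$, write $\nu = \uvar^e \mon$ with $e\ge 1$ and $\mon \in \field[x_1,\ldots,x_{n-1}]$; minimality with respect to $\uvar$ gives $\uvar^{e-1}\mon \in \monbas$, and since $\monbas$ is closed under divisors, $\mon \in \cmons$, so $\mon = \mon_j$ for some $j$; comparing with the defining property of $e_j$ in \cref{eqn:border} then forces $e=e_j$.

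The main obstacle is the forward inclusion, and this is precisely where the stability assumption enters. I need to show each $\uvar^{e_j}\mon_j$ is a minimal generator of $\genby{\lt{\ord}{\ideal}}$, \ie, that $\uvar^{e_j}\mon_j / x_k \notin \lt{\ord}{\ideal}$ for every variable $x_k$ dividing it. The case $x_k = \uvar$ is immediate from $\uvar^{e_j-1}\mon_j \in \monbas$. For $x_k \ne \uvar$ dividing $\mon_j$, I argue by contradiction: if $\uvar^{e_j}(\mon_j/x_k) \in \lt{\ord}{\ideal}$, then this monomial is divisible by $\uvar$, so \cref{def:stability_assumption} applied with variable $x_k$ gives $\frac{x_k}{\uvar}\cdot \uvar^{e_j}(\mon_j/x_k) = \uvar^{e_j-1}\mon_j \in \lt{\ord}{\ideal}$, contradicting $\uvar^{e_j-1}\mon_j \in \monbas$.

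With the leading terms matched, in particular $s=\nbl$, it remains to identify the tails. After relabeling so that $\lt{\ord}{f_i} = \uvar^{e_i}\mon_i$, reducedness of $\matgb$ implies every non-leading monomial of $f_i$ lies in $\monbas$ and is therefore fixed by $\nf{\ord,\ideal}{\cdot}$; combining with $\nf{\ord,\ideal}{f_i}=0$ yields $f_i = \uvar^{e_i}\mon_i - \nf{\ord,\ideal}{\uvar^{e_i}\mon_i}$, which is the $i$-th element of $\mathcal{P}$. Hence $\{f_1,\ldots,f_s\} = \mathcal{P}$ is a basis of $\rmod$ by \cref{thm:border_rmod_basis}.
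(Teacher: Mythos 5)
Your proof is correct and takes essentially the same route as the paper's: both reduce the corollary to the identity \(\{\lt{\ord}{f_i} \mid 1\le i\le s\} = \{\uvar^{e_j}\mon_j \mid 1\le j\le \nbl\}\), invoke stability only to trade some \(x_k\) for \(\uvar\) and contradict \(\uvar^{e_j-1}\mon_j\in\monbas\), and then match each \(f_i\) with the corresponding element of \(\mathcal{P}\) via reducedness. The only organizational difference is in the hard inclusion, where you show directly that each \(\uvar^{e_j}\mon_j\) is a minimal generator of \(\genby{\lt{\ord}{\ideal}}\), whereas the paper first exhibits a Gr\"obner-basis divisor of \(\uvar^{e_j}\mon_j\) and then uses stability to show the cofactor is \(1\); you also spell out the final tail-matching step that the paper leaves implicit.
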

\begin{proof}
  It suffices to prove that, thanks to \(\hyp{\lt{\ord}{\ideal}}\), we have
  \[
    \{\uvar^{e_j}\mon_j \mid 1 \le j \le \nbl\} = \{\lt{\ord}{f_i} \mid 1 \le i \le s\};
  \]
  then \(\mathcal{P}=\{f_1,\ldots,f_s\}\) follows (and in particular
  \(s=\nbl\)).

  To prove this identity, we first observe that for \(1\le i\le s\), the
  monomial \(\lt{\ord}{f_i}\) is divisible by \(\uvar\) and does not belong to
  \(\monbas\), whereas \(\uvar^{-1}\lt{\ord}{f_i}\) belongs to \(\monbas\).
  Therefore
  \[
    \uvar^{-1}\lt{\ord}{f_i} \in \{\mon \in \monbas \mid y\mon \not\in\monbas\}
    = \{y^{e_j-1}\mon_j \mid 1 \le j \le \nbl\},
  \]
  and \(\lt{\ord}{f_i} \in \{\uvar^{e_j}\mon_j \mid 1 \le j \le \nbl\}\). Hence
  \(f_i \in \mathcal{P}\).

  Conversely, for \(1 \le j \le \nbl\), we want to prove \(\uvar^{e_j}\mon_j =
  \lt{\ord}{f_i}\) for some \(i \in \{1,\ldots,s\}\). By construction, the
  monomial \(\uvar^{e_j}\mon_j\) is in \(\lt{\ord}{\mathcal{P}} \subseteq
  \lt{\ord}{\ideal}\). Thus \(\uvar^{e_j}\mon_j\) is divisible by
  \(\lt{\ord}{f}\) for some \(f\in\matgb\). If \(\lt{\ord}{f}\) is not divisible
  by \(\uvar\), then \(\lt{\ord}{f}\) is a divisor of \(\mon_j\), which is
  impossible since \(\mon_j \in \monbas\) and \(\lt{\ord}{f} \in
  \lt{\ord}{\ideal}\). It follows that \(f=f_i\) for some \(i \in
  \{1,\ldots,s\}\). Thus \(\uvar^{e_j}\mon_j = \mon \lt{\ord}{f_i}\) for some
  monomial \(\mon\), which may only involve the variables \(x_1,\ldots,x_{n-1}\)
  since \(\uvar^{e_j-1}\mon_j\) is in \(\monbas\) and thus cannot be a multiple
  of \(\lt{\ord}{f_i}\). If \(\mon\neq 1\), there exists \(k \in
  \{1,\ldots,n-1\}\) such that \(x_k\) divides \(\mon\). By
  \(\hyp{\lt{\ord}{\ideal}}\), the monomial \(\frac{x_k}{\uvar} \lt{\ord}{f_i}\)
  is in \(\lt{\ord}{\ideal}\), hence \(\frac{\mon}{x_k} \frac{x_k}{\uvar}
  \lt{\ord}{f_i}\) is in \(\lt{\ord}{\ideal}\) as well. This contradicts the
  fact that the latter monomial is \(\uvar^{-1} \mon \lt{\ord}{f_i} =
  \uvar^{e_j-1}\mon_j\), which is in \(\monbas\). Thus \(\mon=1\) and we are
  done.
\end{proof}

\section{Complexity and performance}
\label{sec:impact} 

\subsection{Hermite normal form computation}
\label{sec:impact:compute_hnf}

We assume that the basis \(\mat{P} \in \pmatspace{\nbl}{\nbl}\) from
\cref{thm:border_rmod_basis} is known, and we review the computation of its
Hermite normal form. This subsection ends with a proof of \cref{thm:main}.

\paragraph*{Reducing to average degree, and general algorithms.}

By \cref{thm:border_rmod_basis}, the matrix \(\mat{P}\) has column degrees
\((e_1,\ldots,e_\nbl)\), and \(\vsdim = e_1 + \cdots + e_\nbl\) is the degree
\(\vsdim = \dim_\field(\ring/\ideal)\). Hence \(\mat{P}\) has average column
degree \(\frac{\vsdim}{\nbl}\). Then, its Hermite normal form \(\mat{H}\) can be
found deterministically in \(\softO{\nbl^{\expmm-1} \vsdim}\) operations in
\(\field\) \cite[Thm.\,1]{LabahnNeigerZhou2017}.

Besides, it is showed in \cite[Sec.\,6]{LabahnNeigerZhou2017} that computing
\(\mat{H}\) directly reduces to computing the Hermite normal form of a matrix
which is built from \(\mat{P}\) and has slightly larger size but with all
entries of degree at most \(\lceil \frac{\vsdim}{\nbl} \rceil\). Since \(\nbl
\le \vsdim\) here, \(\lceil \frac{\vsdim}{\nbl} \rceil \in
\bigO{\frac{\vsdim}{\nbl}}\), hence the same cost \(\softO{\nbl^{\expmm}
\frac{\vsdim}{\nbl}} = \softO{\nbl^{\expmm-1} \vsdim}\) is obtained by the Las
Vegas randomized algorithm in \cite{GupSto11,Gupta11}. Observe that, in both
cases, the number of logarithmic factors in the cost bound is currently
unknown.

\paragraph*{Hermite normal form knowing degrees.}

Assume the ideal is in shape position; further make the mild assumption that
\(\matgblex \subseteq \rring\) is satisfied, meaning that the variables
\(x_1,\ldots,x_{\nvar-1}\) are in \(\cmons\). Order \(\cmons\) so that its
first \(\nvar\) elements are \((\mon_1,\ldots,\mon_\nvar) =
(1,x_{\nvar-1},\ldots,x_1)\). Then, \(\matgblex = \{h(\uvar),x_{\nvar-1} -
g_{\nvar-1}(\uvar), \ldots, x_1 - g_1(\uvar)\}\), and
{\small\begin{equation}
  \label{eqn:expected_hnf}
  \mat{H} =
  \begin{bmatrix}
    h(\uvar) \\
    -g_{n-1}(\uvar)     & 1                               \\
    \vdots              &   & \ddots                      \\
    -g_1(\uvar)         &   &        & 1                  \\
    -b_{\nvar+1}(\uvar) &   &        &   & 1              \\
    \vdots              &   &        &   &   & \ddots     \\
    -b_{\nbl}(\uvar)    &   &        &   &   &        & 1
  \end{bmatrix}
\end{equation}}
for some polynomials \(b_{\nvar+1},\ldots,b_{\nbl} \in \uring\) of degree less
than \(\deg(h)\). Indeed, for the first \(\nvar\) rows of \(\mat{H}\) this
follows directly from \cref{thm:hnf_gb}, proving also \(\deg(h) = \vsdim =
\deg(\det(\mat{H}))\). Then, properties of Hermite normal forms imply that the
diagonal entries of \(\mat{H}\) are \((h,1,\ldots,1)\) and that the remaining
rows have the above form.

In particular, we know the degree shape of the sought Hermite normal form.
Finding these degrees is the first step of the fastest known Hermite normal
form algorithm \cite[Sec.\,3]{LabahnNeigerZhou2017}, which can therefore be
omitted: we directly use the second step in
\cite[Sec.\,5]{LabahnNeigerZhou2017}. The advantage is that the latter boils
down to one call to a row reduction algorithm, for which the cost bound is
known including logarithmic factors: it is \(\bigO{\nbl^{\expmm}
\timepm{\frac{\vsdim}{\nbl}} (\log(\nbl)^2 + \log(\frac{\vsdim}{\nbl}))}\), if
one uses the fastest known deterministic algorithm
\cite[Thm.\,18]{GuptaSarkarStorjohannValeriote2012}. Here, \(\timepm{\cdot}\)
is a time function for the multiplication of univariate polynomials in
\(\uring\), with usual assumptions recalled for example in
\cite[Sec.\,2]{GuptaSarkarStorjohannValeriote2012}.

Observe that one may still follow this approach when it is unknown whether the
ideal is in shape position. If the obtained matrix does not have the expected
form described in \cref{eqn:expected_hnf}, then the ideal is not in shape
position, and one can restart computations using a more general, slower change
of order algorithm.

\paragraph*{Using a kernel basis to reduce the matrix dimension.}

For our purpose, we are only interested in the \(\nvar\times\nvar\) leading
principal submatrix \(\mat{H}_{1..n,1..n}\), which corresponds to
\(\matgblex\).
To compute it from the known \(\mat{P}\), we can proceed as follows
(\cite[Lem.\,3.1]{LabahnNeigerZhou2017} and~\cite[Lem.\,3.1]{ZhoLab13}):
\begin{itemize}
  \item compute a left kernel basis \(\mat{K} \in \pmatspace{\nvar}{\nbl}\) of
    the right \(\nbl \times (\nbl-\nvar)\) submatrix
    \(\mat{P}_{1..\nbl,\nvar+1..\nbl}\) of \(\mat{P}\), using the algorithm of
    \cite{ZhouLabahnStorjohann2012};
  \item multiply \(\mat{K}\) with the left submatrix: \(\mat{Q} = \mat{K}
    \mat{P}_{1..\nbl,1..\nvar}\), using partial linearization in case of
    unbalanced degrees \cite[Sec.\,3.6]{ZhouLabahnStorjohann2012};
  \item compute the Hermite normal form of \(\mat{Q}\), which is
    \(\mat{H}_{1..n,1..n}\), using \cite[Algo.\,3]{LabahnNeigerZhou2017}.
\end{itemize}

Our implementation, on which we report in \cref{sec:impact:benchs}, is based on
this approach. The advantage is that this uses a single call to the fast kernel
basis algorithm of \cite{ZhouLabahnStorjohann2012}, for which a precise cost
estimate is known.
After the multiplication, whose cost is also well understood, we are left with
the computation of a Hermite normal form of an \(\nvar \times \nvar\) matrix.
In most interesting instances, this has negligible cost, since \(\nvar \ll
\nbl\) (see for example \cref{sec:impact:applications,sec:impact:benchs}).

Explicitly, the complexity of computing the kernel basis \(\mat{K}\) is
\(\bigO{\nbl^{\expmm} \timepm{\frac{\vsdim}{\nbl}} \log(\frac{\vsdim}{\nbl})}\)
\cite[Lem.\,2.10]{NeigerPernet2021}, while the multiplication to obtain
\(\mat{Q}\), although possibly involving unbalanced degrees, has a lower
complexity \cite[Lem.\,2.8]{NeigerPernet2021}.

Here again, one does not have to assume that the ideal is in shape position:
this can be detected from the degrees in \(\mat{Q}\), which in fact can be
predicted from the degrees in the kernel basis \(\mat{K}\). In the case where
the degrees in \(\mat{K}\) reveal that the ideal is not in shape position, one
could switch to another more general method.

\paragraph*{Proof of \cref{thm:main}.}

Since \(\hyp{\lt{\ord}{\ideal}}\) is assumed (via
\cref{lem:structural_assumption_mon}), \cref{cor:gbdrl_is_rmod_basis} ensures
that the set of \(\nbl\) elements of \(\matgbdrl\) whose \(\orddrl\)-leading
term is divisible by \(\uvar\) forms a basis \(\mat{P} \in
\pmatspace{\nbl}{\nbl}\) of \(\rmod\) as a \(\uring\)-module, where \(\cmons\)
is built as in \cref{thm:border_rmod_basis}.

Besides, since the variables $x_1,\ldots,x_{n-1}$ are in the
\(\orddrl\)-monomial basis \(\monbas\), they belong to \(\cmons\), which
implies $\matgblex \subseteq \rring$ according to \cref{lem:shape_is_fine},
since \(\ideal\) is in shape position. Hence \cref{thm:hnf_gb} states that the
Hermite normal form \(\mat{H}\) of \(\mat{P}\) yields the sought
lexicographical Gr\"obner basis.  As we have seen above, computing \(\mat{H}\)
from \(\mat{P}\) takes $O^{\sim}(t^{\omega-1} D)$ operations in $\field$.

\subsection{Extrinsic asymptotics of complexity gains}
\label{sec:impact:applications}

The estimate \(O^{\sim}(\nbl^{\omega-1}\vsdim)\) of Theorem~\ref{thm:main}
  depends on the sparsity indicator $\nbl$ which is less than the degree
  $\vsdim$ of the ideal. These are intrinsic to the ideal and the monomial
  order. On several important classes of problems, the asymptotics of $\nbl$ and
  $\vsdim$ can be expressed as a function of the number of variables $n$, the
  maximum degree of the input polynomials $d$ and some other extraneous
  parameters. These results can then be used to make explicit the speed up
  $\nbl^{2-\omega}\vsdim$ we obtain from the complexity \(\bigO{\nbl \vsdim^2}\)
  of~\cite{FaugereM2011,FaugereM2017}. 

These are given in~\cref{tab:complexity}. For {\it generic} systems of $n$
  polynomials of degree $d$, \cite[Tbl.\,2]{FaugereM2017} provides such
  asymptotics ; we refer to these in the line (rand $n,d$).
  In~\cite[Tbl.\,1]{BerthomieuBostanFergusonSafey2022} asymptotic values of
  $\nbl,\vsdim$ are given for systems defining the critical points of the
  restriction of a linear map to an algebraic set defined by $p$ generic
  polynomials by means of the simultaneous vanishing of maximal minors of a
  truncated Jacobian matrix. We refer to these in the line (crit $n,d,p$). More
  recently, \cite[Tbl.\,1]{FergusonL2022}, provides asymptotic estimates for
  $\nbl,\vsdim$ when considering the $(r+1)$-minors of a polynomial symmetric
  matrix of size $m$ in
  $n=\binom{m-r+1}{2}$ variables (symdet $n,d,m,r$).\\
  For all these systems, the asymptotics of $\nbl^{2-\omega}\vsdim$ appear with
  a positive exponent of quantities greater than $1$, showing that the new
  algorithm is asymptotically faster than \spFGLM. Note that for $2<\omega < 3$,
  the complexity gain is exponential in $n$ for most of them, in particular for
  rand $n, d$ and crit $n,d,p$.

\begin{table}[ht]
  \caption{Asymptotics of \(\nbl\),
    \(\vsdim\) and the ratio
    $\nbl^{2-\expmm} \vsdim$.}
  \label{tab:complexity}
  \centering
  \resizebox{\columnwidth}{!}{
  \begin{tabular}{|c|ccc|}
    \hline
    system & $\vsdim$ & $\nbl$ & speed-up $\nbl^{2-\expmm} \vsdim$ \\
    \hline\hline
    rand $n,d$& \multirow{2}{*}{$d^n$}
      & \multirow{2}{*}{$\frac{d^{n-1}}{\sqrt{n}}$}
        & \multirow{2}{*}{$d^{(n-1)(3-\expmm)}n^{\frac{\expmm-2}{2}}$} 
    \\
    \cite[Cor.\,5.10]{FaugereM2017} & & &
    \\
    \hline
    crit $n,2,p$ & \multirow{2}{*}{$2^p\binom{n-1}{p-1}$}
      & \multirow{2}{*}{$\frac{2^p}{\sqrt{p}}\binom{n-2}{p-1}$}
            & \multirow{2}{*}{$\left(2^p\binom{n-2}{p-1}\right)^{3-\expmm}
              p^{\frac{\expmm-2}{2}}$} 
    \\
    \cite[Thm.~2]{BerthomieuBostanFergusonSafey2022} & & &
    \\
    \hline
    crit $n,d,p$ & \multirow{2}{*}{$d^n\binom{n-1}{p-1}$}
      & \multirow{2}{*}{$\frac{d^{n-1}}{\sqrt{n-p}}\binom{n-2}{p-1}$}
         & \multirow{2}{*}{$\left(d^{n-1}\binom{n-2}{p-1}\right)^{3-\expmm}$} 
    \\
    \cite[Thm.~2]{BerthomieuBostanFergusonSafey2022} & & &
    \\
    \hline
    symdet & \multirow{3}{*}{$m^3$}
      & \multirow{3}{*}{$m^2 d$}
            &\multirow{3}{*}{$\frac{m^{7-2\expmm}}{d^{\expmm-2}}\geq m^{9-3\expmm}$}
    \\
    $3,d,m,m-2$ & & &
    \\
    \cite[Prop.\,9]{FergusonL2022} & & &
    \\
    \hline            
    symdet & \multirow{3}{*}{$m^6 d^6$}
      & \multirow{3}{*}{$m^5 d^5$}
            &\multirow{3}{*}{$(m d)^{16-5\expmm}$}
    \\
    $6,d,m,m-1$ & & &
    \\
    \cite[Prop.\,12]{FergusonL2022} & & &
    \\
    \hline            
    symdet & \multirow{3}{*}{$2^md^{\binom{m}{2}}$}
      &    \multirow{3}{*}{$\frac{2^m}{m}d^{\binom{m}{2}-1}$}
         &\multirow{3}{*}{$\left(2^m d^{\binom{m}{2}}\right)^{3-\expmm} m^{\expmm-2}$}
    \\
    $\binom{m}{2},d,m,1$ & & &
    \\
    \cite[Prop.\,14]{FergusonL2022} & & &
    \\
    \hline            
  \end{tabular}
  }
\end{table}

\subsection{Practical performance}
\label{sec:impact:benchs}

We compare our implementation of the new change of order algorithm with two
other algorithms:
\begin{itemize}
  \item The state-of-the-art implementation of the \spFGLM{} algorithm
    \cite{FaugereM2011,FaugereM2017}, provided by \msolve{} \cite{msolve}. This
    is based on the Wiedemann algorithm, with the core computational task
    consisting of a series of matrix-vector products.
  \item A prototype implementation of a block-Wiedemann variant of
    \spFGLM, whose core computational task consists of a series of
    matrix-matrix products.  For the sake of comparison with our prototype
    PML/NTL implementation of the new algorithm, this was written with NTL
    using the linear algebra tools provided by its \texttt{Mat<zz\_p>} module.
\end{itemize}
Both implementations exploit the structure of the multiplication matrix of
\(\uvar\) in \(\ring/\ideal\) written on the \(\orddrl\)-monomial basis, as
explained in \cref{sec:module:shiftedform}. In this context, the expected
advantage of the block variant comes from the greater efficiency of performing
a single matrix-matrix product \(M \cdot [v_1 \;\cdots\; v_k]\) versus
performing several matrix-vector products \(Mv_i\) for \(1\le i\le k\).

In our experiments on the block-Wiedemann approach, a block-size \(k\) in the
range between \(64\) and \(128\) appeared as a good compromise. When \(k\) is
below \(64\), the benefit from matrix-matrix products remains limited. On the
other hand, when \(k\) is above \(128\), although there could still be some
gain by further increasing the matrix dimension, this is counter-balanced by
the cost of the second step which starts to be non-negligible. This second step
is a matrix fraction reconstruction, performed via an approximant basis of a
\((2k)\times k\) matrix at order \(2\vsdim/k\), for which we used PML
\cite{HyunNeigerSchost2019}.

Note that, although block-Wiedemann approaches are often used for benefiting
from multi-threaded or parallel computations, here only single-threaded
performance is considered, and we keep the design of an optimized,
multi-threaded implementation of our new change of order algorithm as a
future perspective. Indeed, we expect it to also benefit from multi-threading,
since the dominant part of its computations consists of multiplication and
Gaussian elimination of large-dimension matrices over \(\field\).

We summarize our comparison in \cref{tab:random-d-n}. All computations were
performed on a single thread on a computer equipped with
\textsc{Intel}\textsuperscript{\tiny\textregistered}
\textsc{Xeon}\textsuperscript{\tiny\textregistered} \textsc{Gold} CPU 6246R v4
@ 3.40GHz and 1.5TB of RAM.

The base field is \(\field=\ZZ/p\ZZ\) with a 30-bit prime modulus \(p\).
This choice comes from the fact that many application areas require Gr\"obner
bases computations over large fields. This is the case for problems in
multivariate cryptography and number
theory~\cite{Abelard2018,AbelardGS2019,Perret2016}. Furthermore, large
computations over the rationals \(\field=\mathbb{Q}\) boil down to solving
several instances over \(\field=\ZZ/p\ZZ\), through the Chinese Remainder
Theorem. We consider the 30-bit prime fields as a base case in this setting,
since it allows us both to choose sufficiently many primes for large instances,
and to avoid bad primes with higher probability than e.g.~16-bit prime fields.
It also seems to be the base case used in computer algebra software
like  Macaulay2, Maple and
Singular and in the state-of-the-art change of order
implementation in \msolve{} which we compare to.

We observe that our implementation of the new algorithm is always faster than
both other implementations, and that the gap is increasing with the size of the
instances. For large instances, the speed-up factor is close to \(5\).

Let us notice that the block-Wiedemann approach also outperforms \msolve{} for
large sizes, as might be expected, yet only by a very small margin. One
explanation can be that NTL does not seem to use \texttt{AVX2} vectorization
techniques for matrix multiplication over a $30$-bit prime field, whereas
\msolve{} does for its matrix-vector products. Investigating this is a future
perspective, and incorporating \texttt{AVX2} may lead to further accelerations
for the block-Wiedemann approach, but also for the new algorithm which makes an
intensive use of the multiplication of matrices over \(\field\) when
multiplying univariate polynomial matrices.

Let us recall that when computing over the rationals, the common
strategy through Chinese Remainder Theorem is to use \Ffour{} with a
tracer for the computation of $\matgbdrl$ modulo each prime: perform a
full \Ffour{} algorithm modulo the first prime and \emph{learn} which
polynomials are used in the construction of each matrix and remove those that reduce
to $0$, and those used only in these reductions to $0$. This allows one
to minimize the computations 
modulo the subsequent primes. As a practical consequence,
\FGLM{} used to be slower than
\Ffour-tracer, but this is not the case anymore, we have reestablished a kind
of balance. With the above perspective we expect the change of order step
to be consistently faster than the \Ffour-tracer step.
Furthermore, even when computing over a $30$-bit prime field, with
only \Ffour{} and no tracer,
\FGLM{} could take more than $25\%$ of the total time, sometimes even close to
$40\%$ (see rand $(4,7)$ or $(4,8)$), now it is closer to negligible
(often below or close to $10\%$).

\begin{table}[ht]
  \caption{Timings in seconds for random square systems in \(n\) variables
    and degree \(d\), over a prime field \(\ZZ/p\ZZ\) with a \(30\)-bit modulus.
    \textmd{\emph{\Ffour{} is the algorithm of \cite{Faugere1999}. ``\Ffour-tr'' is the
      tracer algorithm after the initial learning phase
      \cite[Sec.\,4.3]{msolve}. ``Wied.'' and ``bl-Wied.'' are the
      change of order comparison points described in
      \cref{sec:impact:benchs}. They use respectively the Wiedemann-based
      \spFGLM{} algorithm \cite{FaugereM2017} and a folklore block-Wiedemann
      variant (see e.g.~\cite{Steel2015,HyunNeigerRahkooySchost2020}).
      ``HNF'' is the new algorithm, based on Hermite normal form, as
  described in \cref{sec:impact:compute_hnf}.}}}
  \label{tab:random-d-n}
  \centering
  \resizebox{\columnwidth}{!}{
    \begin{tabular}{|ccc|cc|ccc|}
      \hline
      \multicolumn{3}{|c|}{\multirow{2}{*}{}}
      & \multicolumn{2}{c|}{
        Step 1: \(\matgbdrl \approx \mat{P}\)}
      & \multicolumn{3}{c|}{
        Step 2: \(\matgblex \approx \mat{H}\)}
      \\
      \cline{4-8}
      \multicolumn{3}{|c|}{}
      & \multicolumn{2}{c|}{\msolve}
      & \msolve & NTL  & PML
      \\
      \hline
      $n,d$ & $\vsdim$ & $t$ & \Ffour & \Ffour-tr & Wied. & bl-Wied. & HNF \\
      \hline
      $11,2$ & 2048 & 462 & 11.6 & 1.1 & 1.2 & 1.7 & 0.8 \\
      \hline
      $12,2$ & 4096 & 924 & 115.9 & 8.3 & 6.5 & 14.5 & 5.3 \\
      \hline
      $13,2$ & 8192 & 1716 & 970 & 62 & 103.6 & 110 & 34.8 \\
      \hline
      $14,2$ & 16384 & 3432 & 7921 & 460 & 1011 & 880 & 240 \\
      \hline
      $15,2$ & 32768 & 6435 & 61381 & 3193 & 7844 & 6691 & 1665 \\
      \hline
      $16,2$ & 65536 & 12870 & 482515 & 24523 & 58744 & 52709 & 11359 \\
      \hline
      $8,3$ & 6561 & 1107 & 122.6 & 12.8 & 23.6 & 44.7 & 15.1 \\
      \hline
      $9,3$ & 19683 & 3139 & 3552.7 & 361 & 1302 &1163 & 314 \\
      \hline
      $10,3$ & 59049 & 8953 & 95052 & 8664 & 34844 &29974 & 6709 \\
      \hline
      $6,4$ & 4096 & 580 & 9.9 & 2.2 & 4 & 8.8 & 3.5 \\
      \hline
      $7,4$ & 16384 & 2128 & 876 & 128 & 575 & 545 & 157 \\
      \hline
      $8,4$ & 65536 & 8092 & 57237 & 6977 & 36454 & 33452 & 7231 \\
      \hline
    \end{tabular}
  }
\end{table}

\bibliographystyle{ACM-Reference-Format}


\end{document}